\newtheoremstyle{indenteddefinition}{8pt}{8pt}{\addtolength{\leftskip}{2.5em}\addtolength{\rightskip}{2.5em}}{-2.5em}{\bfseries}{.}{.5em}{{\thmname{#1 }}{\thmnumber{#2}}{\thmnote{ (#3)}}}
\newtheorem{thm}{Theorem}[section]
\newtheorem{cor}[thm]{Corollary}
\newtheorem{lem}[thm]{Lemma}
\newtheorem{prop}[thm]{Proposition}
\theoremstyle{indenteddefinition}
\newtheorem{defn}[thm]{Definition}
\theoremstyle{indenteddefinition}
\numberwithin{equation}{section}
\newcommand{\C}{\mathfrak{C}}
\renewcommand{\P}{\mathbb{P}}
\renewcommand{\Pr}{\mathcal{P}}
\newcommand{\Z}{\mathbb{Z}}
\renewcommand{\O}{\mathcal{O}}
\newcommand{\fix}{\mbox{fix}}
\newcommand{\gn}{\mbox{gn}}
\newcommand{\Fix}{\mbox{Fix}}
\renewcommand{\b}{\beta}
\title[Guessing Numbers of Odd Cycles]{Guessing Numbers of Odd Cycles}
\author[Atkins]{Ross Atkins}
\address{
University of Oxford \\
Department of Statistics \\
1 South Parks Road \\
Oxford OX1 3TG \\
United Kingdom}
\email{ross.atkins@univ.ox.ac.uk}
\author[Rombach]{Puck Rombach}
\address{
University of California, Los Angeles \\ 
Department of Mathematics \\
520 Portola Plaza \\
CA 90095-1555\\
United States 
}
\email{rombach@math.ucla.edu}
\author[Skerman]{Fiona Skerman}
\address{
Heilbronn Institute \\
University of Bristol \\
Howard House, Queen's Ave\\
Bristol BS8 1SD\\
United Kingdom}
\email{f.skerman@bristol.ac.uk}
\begin{document}

\begin{abstract}
For a given number of colours, $s$, the guessing number of a graph is the base $s$ logarithm of the size of the largest family of colourings of the vertex set of the graph such that the colour of each vertex can be determined from the colours of the vertices in its neighbourhood. An upper bound for the guessing number of the $n$-vertex cycle graph $C_n$ is $n/2$. It is known that the guessing number equals $n/2$ whenever $n$ is even or $s$ is a perfect square \cite{Christofides2011guessing}. We show that, for any given integer $s\geq 2$, if $a$ is the largest factor of $s$ less than or equal to $\sqrt{s}$, for sufficiently large odd $n$, the guessing number of $C_n$ with $s$ colours is $(n-1)/2 + \log_s(a)$. This answers a question posed by Christofides and Markstr\"{o}m in 2011~\cite{Christofides2011guessing}. We also present an explicit protocol which achieves this bound for every $n$. \\ Linking this to index coding with side information, we deduce that the information defect of $C_n$ with $s$ colours is $(n+1)/2 - \log_s(a)$ for sufficiently large odd $n$. Our results are a generalisation of the $s=2$ case which was proven in~\cite{bar2011index}.
\end{abstract}

\maketitle

\keywords{\textbf{Keywords:} guessing number, cycle graph, information defect, index codes, unicast, entropy}

\section{Introduction}

Computing the guessing number (Definition~\ref{defn:gn}) of a graph $G$, can be equivalent to determining whether the multiple unicast coding problem \cite{Dougherty2006nonreversibility} is solvable on a network related to $G$. The guessing number of a graph, $G$, is also studied for its relation to the information defect of $G$ and index coding with side information \cite{Alon2008broadcasting,Gadouleau2011graph}. Exact guessing numbers are known only for a small number specific classes of graphs, such as perfect graphs, or small cases of non-perfect graphs  \cite{Baber2014graph,Cameron2014guessing,Chang2014linear,Wu2009guessing}.   In particular, the guessing number of odd cycles, which is the focus of this paper, was not known, except for small cases \cite{Christofides2011guessing,bar2011index}. Here we compute the guessing number of the cycle graph, $C_n$, by analysing optimal protocols for the ``guessing game''. 

The guessing game was introduced by Riis in 2007 \cite{Riis2007information}. It is a cooperative $n$-player information game played on a graph with $n$ vertices with $s$ colours. The guessing game on the complete graph $K_n$ with $s=2$ colours is played as follows. Each of the $n$ players are given a hat that is red or blue uniformly and independently at random. Each player can see everyone else's hat, but not their own. The players collaboratively aim to maximise the probability that all players guess the colour of their hats correctly. Much of the popularity of this puzzle is owed to the striking difference between the success probability achieved by uncoordinated random guessing and an optimal protocol, which are $1/2^n$ and $1/2$ respectively. 

The general guessing game considered here differs from many other variants of multiplayer information games (for example: the ``hat guessing game'' \cite{Butler2009hat}, ``Ebert's game'' \cite{Ebert1998applications} and the ``hats-on-a-line game" \cite{Patterson2010another}) in the following critical ways:
	\begin{itemize}
	\item The colours are assigned to each player independently and uniformly.
	\item Every player must guess (no passing or remaining silent).
	\item Each player does not necessarily see every other player's colours; two players can see each other if and only if they are joined by an edge in a given graph $G$.
	\item The players guess simultaneously so no communication is possible once the colours are assigned. 
	\item The guessing game is won only if \emph{all} the players guess correctly. An incorrect guess by any single player would mean that the whole team of $n$ players collectively lose the guessing game (unlike \cite{Butler2009hat}, for example which seeks to optimise the number of players who guess correctly).
	\end{itemize}
It is known that the greatest probability of winning the guessing game can be achieved by a deterministic protocol \cite{Cameron2014guessing}. Let $G = (V,E)$ be a graph where $V = \{ v_1,v_2, \ldots ,v_n \}$ is the set of vertices and $E \subseteq \binom{V}{2}$ is the edge set. We restrict our attention to undirected graphs, but the problem is generalizes to directed graphs in an obvious way.
\begin{defn}[Protocol, colouring]
\label{defn:prot}
For any positive integer $s$, we let $\Z_s$, the group of all residues modulo $s$, denote the \emph{colour set}. 
A \emph{colouring} of $G$ with $s$ colours is an $n$-tuple $c = (c_1,c_2, \ldots ,c_n)$ such that $c_i \in \Z_s$. 
The set of all colourings of $G$ with $s$ colours is denoted $\Z_s^n$. 
A \emph{protocol} on  $G$ with $s$ colours is any $n$-tuple $\Pr = (f_1,f_2,f_3, \ldots ,f_n)$ where for each $i$, the [deterministic] function $f_i : \Z_s^n \rightarrow \Z_s$ is such that $f_i(c)$ is dependent only on $c_j$ for all $j$ such that $v_iv_j \in E$, {\it i.e.}
for any $i$ and any two colourings $c = (c_1,c_2, \ldots ,c_n)$ and $c^\prime = (c_1^\prime,c_2^\prime,\ldots ,c_n^\prime)$, if $c_j^\prime = c_j$ for all $j$ such that $v_iv_j \in E$ then $f_i(c) = f_i(c^\prime)$. The \emph{fixed set} of $\Pr$, $\Fix(\Pr)$, is the set of all invariant colourings:
	$$ \Fix(\Pr) = \big\{ c \in \Z_s^n \: | \: c_i=f_i(c) \; \forall i \big\}. $$
	\end{defn}
\begin{defn}[Fixed number, fixed set]
\label{defn:gn} 
The \emph{fixed number} of $\Pr$ is the size of its fixed set; $\fix(\Pr) = |\Fix(\Pr)|$. 
A protocol $\Pr$ is called \emph{non-trivial} if $\Fix(\Pr) \not= \emptyset$. 
A protocol is called \emph{optimal} if it has maximal fixed number. 
\end{defn}
\begin{defn}[Guessing number]
\label{defn:fixed} 
The \emph{guessing number} of $G$ with $s$ colours is defined as
	$$ \gn(G,s) = \log_s \max_\Pr \left[ \fix(\Pr) \right]. $$
\end{defn}

We assign the $n$-tuple of colours $c \in \Z_s^n$ uniformly at random to the set of players, who are each identified with a vertex of $G$.
The guesses of the players are given by $\Pr(c)$, so the players win if and only if $c = \Pr(c)$. Hence, the probability that an optimal protocol $\Pr$ wins is
	$$ \P\big(c = \Pr(c)\big) = \frac{\fix(\Pr)}{|\Z_s^n|} = s^{\gn(G,s)-n}. $$
Christofides and Markstr\"{o}m \cite{Christofides2011guessing} showed that, for a perfect graph $G$ and any $s$, $\gn(G,s) = n-\alpha$ where $\alpha$ is the size of the largest independent set in $G$. For example, the complete graph $K_n$ is a perfect graph with $\alpha = 1$, so an optimal protocol on $K_n$, wins with probability $1/s$. The $3$-cycle and the even-cycle $C_{2k}$ (for any positive integer $k$) are both perfect graphs with $\alpha(C_3)=1$ and $\alpha(C_{2k})=k$ so 
	\begin{equation}
	\label{eq:perfect_graphs}
	\gn(C_3,s) = 2 \quad \mbox{and} \quad \gn(C_{2k},s)=k \quad \forall \: k. 
	\end{equation}
Henceforth, we shall consider only the cycle graphs $C_n$ for odd $n \geq 5$. In \cite{Christofides2011guessing}, it is shown that $$\gn(C_5,2) =5,$$
and the analysis in \cite{bar2011index} shows that
$$\gn(C_n,2) = \frac{n-1}{2},\;\; \mbox{for odd }n\geq 7. $$
For general $s$, Christofides and Markstr\"{o}m define protocols called ``the clique strategy'' and ``the fractional-clique strategy'' \cite{Christofides2011guessing}. The fractional clique strategy is only defined when the number of colours $s$ is a perfect power, and it is shown to be optimal on the odd cycle whenever $s$ is a perfect square, {\it i.e.} 
	\begin{equation}
	\label{eq:s_square}
	\gn(C_n,m^2) = \frac{n}{2} \qquad \forall \: n,m. 
	\end{equation}
In Definition~\ref{defn:P_fcp}, a protocol $\Pr_{fcp}$ is defined on odd cycles for any number of colours $s$. The protocol $\Pr_{fcp}$ is equivalent to the clique-strategy when $s$ is prime, and to the fractional-clique-strategy when $s$ is a perfect square. The protocol $\Pr_{fcp}$ is called the \emph{fractional-clique-partition protocol} to emphasise that it is very closely related to Christofides and and Markstr\"{o}m's fractional-clique strategy. Our main result in Theorem~\ref{thm:n_large} states that, for any given $s$, this fractional-clique-partition protocol is optimal on any large enough odd cycle.

The rest of this paper is organised as follows. In Section \ref{sec:notation}, we summarise a few of the known results on guessing numbers, and introduce the concepts of entropy and mutual information, which we will use heavily in our proofs. In Section \ref{sec:frac}, we define the fractional-clique-partition protocol, which is a refinement of the protocol introduced in \cite{Christofides2011guessing} and we prove that for odd $n$, as the number of colours grows, this protocol achieves a $\fix(\Pr)$ lies between $s^{n/2}$ and $s^{n/2}(1-\O(n/\sqrt{s}))$ (Theorem \ref{thm:round_up}). In Section \ref{sec:entropy}, we lay the technical groundwork which is needed for Section~\ref{sec:n_large}. Then, in Section \ref{sec:n_large}, we focus on the case of large $n$ compared to $s$, and we prove that the fractional-clique-partition protocol is in fact optimal on large enough odd cycles (Theorem~\ref{thm:n_large}). In Section~\ref{sec:defective}, we link this to index coding with side information and compute the size of an optimal index code for $C_n$ with $s$ colours when $n$ is odd and sufficiently large.

\section{Backround Material and Notation}
\label{sec:notation}

Many of our proofs will use the concept of the entropy of a random variable. Entropy is defined in Definition~\ref{defn:entropy} and we list three crucial properties in Proposition~\ref{prop:known_entropy_results}. 
In this paper we take most logarithms base $s$, including inside the definitions of entropy.
In the rest of this section, we present a few known results on the guessing number, define some useful random variables on the cycle graph and a notion of entropy, all of which will be used extensively in our proofs. When possible, we are consistent with the definitions and notations given in \cite{Cameron2014guessing,Chang2014linear,Christofides2011guessing,Baber2014graph,Riis2007graph,Riis2007information}. We start with a small, useful result that shows, intuitively, that we are allowed to ``forget'' some colours. 

\begin{prop}
\label{prop:monotone}
Let $G$ be a graph, let $s$ and $s^\prime$ be positive integers with $s^\prime \leq s$, and let $\Pr$ be any protocol on $G$ with $s$ colours. There exists a protocol $\Pr^\prime$ on $G$ with $s^\prime$ colours such that 
	$$ \big\{ c \in \Fix(\Pr) \: \big| \: 0 \leq c_i < s^\prime \: \forall i \big\} \subseteq \Fix(\Pr^\prime). $$
\end{prop}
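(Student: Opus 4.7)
The plan is to construct $\Pr'$ by simply restricting each guessing function $f_i$ of $\Pr$ to colourings that use only the first $s'$ colours, and patching over the remaining cases arbitrarily. Concretely, for each vertex $v_i$ I would define $f_i': \Z_{s'}^n \to \Z_{s'}$ by
\begin{equation*}
f_i'(c') = \begin{cases} f_i(c') & \text{if } f_i(c') \in \{0,1,\ldots,s'-1\}, \\ 0 & \text{otherwise}, \end{cases}
\end{equation*}
where on the right-hand side we are viewing $c' \in \Z_{s'}^n$ as an element of $\Z_s^n$ via the natural inclusion $\{0,\ldots,s'-1\} \hookrightarrow \Z_s$.

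The first thing to verify is that $\Pr' = (f_1',\ldots,f_n')$ is a valid protocol on $G$ with $s'$ colours in the sense of Definition~\ref{defn:prot}. This is immediate: since $f_i(c)$ depends only on the coordinates $c_j$ with $v_i v_j \in E$, the same is true of $f_i'(c')$, because the conditional definition above uses nothing about $c'$ other than the value $f_i(c')$.

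Next, I would check the claimed inclusion directly. Let $c \in \Fix(\Pr)$ with $0 \leq c_i < s'$ for every $i$, and view $c$ as an element of $\Z_{s'}^n$. For every $i$, the defining equation $c_i = f_i(c)$ forces $f_i(c) \in \{0,1,\ldots,s'-1\}$, so the first clause of the definition of $f_i'$ applies and $f_i'(c) = f_i(c) = c_i$. Therefore $c \in \Fix(\Pr')$, which is exactly what we need.

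There is no real obstacle here; the only thing to be careful about is the mild abuse of notation when identifying $\Z_{s'}$ with the subset $\{0,1,\ldots,s'-1\}$ of $\Z_s$, which is harmless because the proposition is only about fixed colourings whose coordinates already lie in that subset. The choice of $0$ as the default value in the second clause of $f_i'$ is arbitrary and plays no role in the argument.
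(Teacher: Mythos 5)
Your proof is correct and follows essentially the same construction as the paper: restrict $f_i$ to inputs with entries in $\{0,\ldots,s'-1\}$, keep the output when it already lies in that range, and default to $0$ otherwise. If anything, your version is slightly cleaner, since by defining $f_i'$ only on $\Z_{s'}^n$ you avoid the superfluous case in the paper's definition where a neighbour's colour could exceed $s'$.
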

\begin{proof}
If $\Pr = (f_1,f_2, \ldots f_n)$ then define $\Pr^\prime = (f_1^\prime,f_2^\prime, \ldots ,f_n^\prime )$ in the following way.
\begin{itemize}
\item If $0 \leq c_j<s^\prime$ for all $j$ such that $v_iv_j \in E$, and $0 \leq f_i(c)<s^\prime$ then $f_i^\prime(c) = f_i(c)$.
\item If $s^\prime \leq c_j < s$ for any $j$ such that $v_iv_j \in E$, or $s^\prime \leq f_i(c) < s$ then $f_i^\prime(c) = 0$.
\end{itemize}
For any colouring $c \in \Fix(\Pr)$, if $0 \leq c_i < s^\prime$ for all $i$, then $\Pr^\prime(c) = \Pr(c) = c$ so $c \in \Fix(\Pr^\prime)$.
\end{proof}

\begin{defn}[Entropy, mutual information]
\label{defn:entropy}
Let $A_1, \ldots, A_k$ be random variables which take values in a finite set $\mathcal{A}$. The \emph{entropy} of $A_1,\ldots, A_k$ is denoted $H(A_1,\ldots, A_k)$ and is given by:
	$$ H(A_1,\ldots,A_k) = -\sum_{a_1,\ldots,a_k \in \mathcal{A}^k} \P(A_1=a_1,\ldots,A_k=a_k) \log_s \P(A_1=a_1,\ldots,A_k=a_k). $$	
The \emph{mutual information} of $A_1$ and $A_2$ is denoted $I(A_1;A_2)$ and is given by: 
	$$ I(A_1;A_2) = H(A_1) + H(A_2) - H(A_1,A_2). $$
	Let $B$ be another random variable taking values in $\mathcal{A}$. The conditional mutual information of $I(A_1;A_2|B)$ is given by
	\begin{equation}
	I(A_1;A_2|B) = H(A_1,B) + H(A_2,B) - H(A_1,A_2,B) - H(B).
	\label{eq:conditional_mutual_information}
	\end{equation}
\end{defn}

\begin{prop}
\label{prop:known_entropy_results}
Let $A_1$, $A_2$ be random variables which take values in a finite sets $\mathcal{A}$. 
\begin{enumerate}
\item $H(A_1) \leq \log |\mathcal{A}|$ with equality if and only if $A_1$ is uniformly distributed. 
\item $I(A_1;A_2) \geq 0$ with equality if and only if $A_1$ and $A_2$ are independent.
\item $I(A_1;A_2|B) \geq 0$ with equality if and only if $A_1$ and $A_2$ are independent conditional on $B$. 
\end{enumerate}
\end{prop}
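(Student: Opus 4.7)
The plan is to derive all three parts from a single master inequality: the non-negativity of the Kullback--Leibler divergence $D(p\|q)=\sum_x p(x)\log_s\!\tfrac{p(x)}{q(x)}\ge 0$, with equality iff $p=q$. This in turn is Gibbs' inequality, which is a one-line application of Jensen's inequality to the concave function $\log_s$: since $-\log_s$ is convex,
$$ D(p\|q) = \sum_{x:p(x)>0} p(x)\bigl(-\log_s\tfrac{q(x)}{p(x)}\bigr) \ge -\log_s \sum_{x:p(x)>0} p(x)\tfrac{q(x)}{p(x)} \ge 0, $$
with equality iff $q(x)/p(x)$ is constant on the support of $p$ and that support is everything, i.e.\ $p=q$. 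I would record this as the only real calculation; all three parts of the proposition are then two lines each.

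For (1), let $u$ denote the uniform distribution on $\mathcal{A}$ and $p$ the distribution of $A_1$. Then a direct rearrangement gives
$$ \log_s|\mathcal{A}| - H(A_1) \;=\; \sum_{a\in\mathcal{A}} p(a)\log_s\bigl(|\mathcal{A}|\,p(a)\bigr) \;=\; D(p\|u), $$
using the convention $0\log_s 0=0$. Applying Gibbs yields $H(A_1)\le\log_s|\mathcal{A}|$ with equality iff $p=u$, i.e.\ iff $A_1$ is uniform.

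For (2), I would first expand the definition of $I(A_1;A_2)$ in terms of the joint distribution and the product of marginals and verify the textbook identity
$$ I(A_1;A_2) \;=\; \sum_{a_1,a_2} p(a_1,a_2)\,\log_s\frac{p(a_1,a_2)}{p(a_1)\,p(a_2)} \;=\; D\bigl(p_{A_1,A_2} \,\big\|\, p_{A_1}\otimes p_{A_2}\bigr), $$
after which Gibbs gives $I(A_1;A_2)\ge 0$ with equality iff $p_{A_1,A_2}=p_{A_1}\otimes p_{A_2}$, i.e.\ iff $A_1$ and $A_2$ are independent. For (3), I would expand \eqref{eq:conditional_mutual_information} similarly and verify the standard identity
$$ I(A_1;A_2\mid B) \;=\; \sum_{b} \P(B=b)\,I(A_1;A_2\mid B=b), $$
where $I(A_1;A_2\mid B=b)$ denotes the mutual information of the conditional distributions given $B=b$. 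Each summand is $\ge 0$ by part (2), and the sum is $0$ iff every summand with $\P(B=b)>0$ vanishes, which by the equality case of (2) is exactly conditional independence of $A_1$ and $A_2$ given $B$.

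There is no real obstacle here; these are completely standard information-theoretic facts. The only minor care needed is bookkeeping around the convention $0\log_s 0=0$ and restricting sums to supports when invoking Jensen, so that the equality cases are stated correctly. Since the proposition is used only as a black box in later sections, I would keep the proof short and refer the reader to a standard reference (e.g.\ Cover and Thomas) for a fuller treatment.
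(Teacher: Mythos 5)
Your proof is correct, and all three reductions to Gibbs' inequality (via $D(p\|u)$, $D(p_{A_1,A_2}\|p_{A_1}\otimes p_{A_2})$, and the conditional decomposition $I(A_1;A_2\mid B)=\sum_b \P(B=b)\,I(A_1;A_2\mid B=b)$) are the standard ones, with the equality cases tracked properly. For what it is worth, the paper does not actually prove this proposition at all: it simply refers the reader to Cover and Thomas, which is exactly the option you flag in your final paragraph; your written-out argument is the textbook proof one would find there.
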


For a proof of the results in Proposition~\ref{prop:known_entropy_results} we refer the reader to \cite{Cover2006elements}. 

\begin{defn}
For a non-empty set $S$, we use the notation $A \in_u S$ to mean $A$ is a random variable distributed uniformly over all elements in $S$. 
\end{defn}

\begin{defn}[Notation for $\mathbf{C_n}$]
\label{defn:C_n}
The cycle graph, $C_n$, has $n$ vertices $V = \{ v_1,v_2, \ldots ,v_n \}$. The edge set of $C_n$ is 
	$$ E = \{ v_iv_{i+1} \: | \: i = 1,2,3, \ldots ,n \} $$
(indices are always taken modulo $n$). In a slight abuse of notation, for any protocol $\Pr = (f_1,f_2,f_3, \ldots ,f_n)$ on $C_n$ with $s$ colours, we say $f_i : \Z_s^2 \rightarrow \Z_s$ where 
	$$ f_i(c) = f_i(c_{i-1},c_{i+1}). $$
Recall that a protocol $\Pr$ is non-trivial if $\Fix(\Pr) \neq \emptyset$. For a given non-trivial protocol $\Pr$ on $C_n$, define $X = (X_1,X_2, \ldots ,X_n)$ to be a colouring chosen uniformly at random from $\Fix(\Pr)$. \emph{i.e.} 
	$$ X \in_u \Fix(\Pr). $$
Note that the random colouring $X = (X_1,X_2, \ldots ,X_n)$ is only defined for non-trivial protocols $\Pr$. 
To simplify notation we will sometimes denote the entropy of a tuple of $X_i$s by 
	$$ h(i_1,i_2,i_3, \ldots ) = H(X_{i_1},X_{i_2},X_{i_3}, \ldots ). $$
Since $X_i$ is determined by $(X_{i-1},X_{i+1})$ we must have $H(X_{i-1},X_i,X_{i+1}) = H(X_{i-1},X_{i+1})$ so $h(i-1,i,i+1)=h(i-1,i+1)$. In general we can freely remove the argument $i$ from $h( \ldots , i-1 , i , i+1 , \ldots )$ as long as we don't remove the arguments $i-1$ and $i+1$. 
	\begin{equation}
	\label{eq:awesome}
	h( \ldots , i-1 , i , i+1 , \ldots ) = h( \ldots , i-1 , i+1 , \ldots )
	\end{equation}
To simplify notation even further, for integers $j<k$, let $H_j^k$ denote the quantity
	$$ H_j^k = h(j,j+1,j+2, \ldots ,k-1) + h(j+1,j+2,j+3, \ldots ,k). $$
\end{defn}

\begin{prop}
\label{prop:H=sum(H)}
For any three integers $i,j,k$ such that $1 \leq i < j$ and $j+1 < k \leq n$.  
	$$ H_i^k \leq H_i^j + H_{j+1}^k. $$
\end{prop}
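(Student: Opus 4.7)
The plan is to upper bound $H_i^k$ by applying subadditivity of entropy to two natural splittings of the index tuples that define it, and then to reinstate the single omitted index for free using identity~(\ref{eq:awesome}). Subadditivity itself is immediate from part (2) of Proposition~\ref{prop:known_entropy_results}: for any two disjoint sets of indices $S, T \subseteq \{1,\ldots,n\}$, viewing $X_S$ and $X_T$ as random variables taking values in $\Z_s^{|S|}$ and $\Z_s^{|T|}$ respectively, the inequality $I(X_S;X_T) \geq 0$ rearranges to $H(X_S,X_T) \leq H(X_S) + H(X_T)$.

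First I would split $\{i, i+1, \ldots, k-1\}$ into the disjoint union $\{i, \ldots, j-1\} \sqcup \{j+1, \ldots, k-1\}$, omitting only the index $j$, and apply subadditivity to obtain
\begin{equation*}
h(i, \ldots, j-1, j+1, \ldots, k-1) \leq h(i, \ldots, j-1) + h(j+1, \ldots, k-1).
\end{equation*}
The tuple on the left contains both indices $j-1$ and $j+1$, so by (\ref{eq:awesome}) the index $j$ may be inserted back into it without altering the entropy, turning the left-hand side into $h(i, \ldots, k-1)$. An entirely analogous argument using the split $\{i+1, \ldots, j\} \sqcup \{j+2, \ldots, k\}$ — this time with $j+1$ as the omitted index, whose neighbours $j$ and $j+2$ are both present — gives
\begin{equation*}
h(i+1, \ldots, k) \leq h(i+1, \ldots, j) + h(j+2, \ldots, k).
\end{equation*}

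Summing these two inequalities and recognising the four summands on the right as precisely the defining pieces of $H_i^j$ and $H_{j+1}^k$ immediately yields $H_i^k \leq H_i^j + H_{j+1}^k$. The hypotheses $1 \leq i < j$ and $j+1 < k \leq n$ ensure that all four index ranges are non-empty and that the vertex to be reinstated lies strictly between two indices already present in the relevant tuple, so the two uses of (\ref{eq:awesome}) are legitimate. The only real insight here is noticing that a single "missing" middle index can be dropped, have subadditivity applied, and then be reinserted at no entropic cost; once that is spotted, the remainder is bookkeeping and I anticipate no substantive obstacle.
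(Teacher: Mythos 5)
Your proof is correct and is essentially identical to the paper's: both drop the middle index $j$ (respectively $j+1$) from each of the two tuples defining $H_i^k$, invoke subadditivity of entropy on the resulting disjoint split, and then reinstate the dropped index at no cost via identity~(\ref{eq:awesome}), summing the two resulting inequalities. The only difference is presentational — the paper applies the identity before subadditivity and you apply it after — which changes nothing substantive.
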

\begin{proof}
We add up the following inequalities:
	\begin{align*}
	h(i,i+1, \ldots , k-1) 
	& = h(i, \ldots ,j-1,j+1, \ldots , k-1) \\
	& \leq  h(i, \ldots ,j-1) + h(j+1 \ldots , k-1), \\ \mbox{and} \quad  
	h(i+1,i+2, \ldots , k) 
	& = h(i+1, \ldots ,j,j+2, \ldots , k) \\
	& \leq  h(i+1, \ldots ,j) + h(j+2, \ldots , k).
	\end{align*}
\end{proof}

\begin{lem}
\label{lem:H(X),h(i)}
If $\Pr$ is a non-trivial protocol on $C_n$ with $s \geq 2$ colours and $X\in_u \fix(\Pr)$, then, for all $i$, 
	$$\log_s \fix(\Pr) = H(X), \;\; h(i) \leq 1.$$ 
\end{lem}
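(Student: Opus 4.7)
The plan is that both assertions follow directly from item~(1) of Proposition~\ref{prop:known_entropy_results}, which says that a random variable taking values in a finite set $\mathcal{A}$ has entropy at most $\log_s |\mathcal{A}|$, with equality precisely when the distribution is uniform. (Recall that all logarithms, including those inside the definition of entropy, are base $s$ in this paper.)

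For the first equality, I would observe that by hypothesis $X = (X_1,\dots,X_n)$ is chosen uniformly from $\Fix(\Pr)$, which is non-empty since $\Pr$ is non-trivial. So $X$ is a random variable on the finite set $\Fix(\Pr) \subseteq \Z_s^n$ whose distribution is uniform on its support, giving $H(X) = \log_s |\Fix(\Pr)| = \log_s \fix(\Pr)$.

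For the bound $h(i) \le 1$, note that $h(i) = H(X_i)$ and $X_i$ takes values in $\Z_s$, a set of size $s$. Applying Proposition~\ref{prop:known_entropy_results}(1) again gives $H(X_i) \le \log_s s = 1$.

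There is really no obstacle here; the statement is essentially a direct translation of the uniform-entropy bound. The only point worth flagging is that the logarithm base is $s$ throughout, so $\log_s s = 1$ is what produces the clean bound $h(i)\le 1$; this is also the reason the authors chose base $s$ for entropy, so that single-vertex entropies lie in $[0,1]$ and additive estimates like those in Proposition~\ref{prop:H=sum(H)} can be compared directly with $\log_s \fix(\Pr) = H(X)$.
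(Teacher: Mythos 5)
Your proof is correct and takes essentially the same route as the paper: both use the fact that entropy of a variable over a finite set is at most the base-$s$ logarithm of the set's size, with equality for the uniform distribution (the paper computes this directly from the definition, you cite Proposition~\ref{prop:known_entropy_results}(1), but the content is identical). No issues.
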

\begin{proof}
The entropy of any random variable over a finite domain is maximised when the variable is uniformly distributed. Therefore, $h(i) = H(X_i) \leq H(U)$ where $U$ is a random variable uniformly distributed over $\Z_s$. Hence, 
	$$ h(i) \leq H(U) = - \sum \frac{1}{s} \log_s \frac{1}{s} = 1. $$
The variable $X$ is uniformly distributed over $\Fix(\Pr)$. Therefore, 
	$$ H(X) = - \sum \frac{1}{\fix(\Pr)} \log_s \frac{1}{\fix(\Pr)} = \log_s \fix(\Pr). $$
\end{proof}

\begin{lem}
\label{lem:H_j^k<sum(h(i))}
If $\Pr$ is a non-trivial protocol on $C_n$ with $s \geq 2$ colours and $X\in_u \fix(\Pr)$, then 
	$$ H_j^k \leq \sum_{i=j}^{k} H(X_i) ,$$
for any $j \leq 1$ and $j+3 \leq k \leq n$. 
\end{lem}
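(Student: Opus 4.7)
The plan is proof by induction on $k-j$, with base case $k-j=3$. The base case follows directly from the given tools: unfolding the definition,
$$H_j^{j+3}=h(j,j+1,j+2)+h(j+1,j+2,j+3).$$
In each summand the middle index is sandwiched between the two surviving ones, so~\eqref{eq:awesome} lets me drop it, producing $H_j^{j+3}=h(j,j+2)+h(j+1,j+3)$. The index sets $\{j,j+2\}$ and $\{j+1,j+3\}$ are disjoint and together partition $[j,j+3]$, so subadditivity of joint entropy gives $H_j^{j+3}\le\sum_{i=j}^{j+3}h(i)$, as required.

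For the inductive step $k-j\ge 4$, the natural move is to apply Proposition~\ref{prop:H=sum(H)} with the split $m=j+3$, which yields $H_j^k\le H_j^{j+3}+H_{j+4}^k$ whenever $k\ge j+5$. The first summand is then bounded by the base case, and the second by the inductive hypothesis provided $k-j\ge 7$; because the index ranges $[j,j+3]$ and $[j+4,k]$ partition $[j,k]$, the two sub-bounds add cleanly to $\sum_{i=j}^{k}h(i)$.

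This recursion leaves open the small intermediate cases $k-j\in\{4,5,6\}$. The odd value $k-j=5$ admits a direct argument in the same spirit as the base case: repeated application of~\eqref{eq:awesome} reduces $h(j,\ldots,j+4)$ to $h(j,j+2,j+4)$ and $h(j+1,\ldots,j+5)$ to $h(j+1,j+3,j+5)$, and the two reduced index sets are disjoint and partition $[j,j+5]$, so subadditivity closes the case. The hard part will be the even values $k-j\in\{4,6\}$: any full~\eqref{eq:awesome}-reduction of the two summands of $H_j^k$ is forced to share at least one index, so naive subadditivity leaves a stray marginal entropy that has to be absorbed. To close these cases I would combine~\eqref{eq:awesome} with sharper identities such as $h(X_a,X_{a+1})\le h(X_a,X_{a+2})$, which follows from the chain rule together with~\eqref{eq:awesome} via $h(X_a,X_{a+1},X_{a+2})=h(X_a,X_{a+2})$ and $h(X_a,X_{a+1})\le h(X_a,X_{a+1},X_{a+2})$; such identities let me shift one of the two reductions onto a disjoint cover of $[j,k]$ so that the residual term lands inside $\sum_{i=j}^{k}h(i)$ rather than outside it.
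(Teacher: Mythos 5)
Your base case matches the paper's, and your direct argument for $k-j=5$ is correct. However, your plan to split at $j+3$ via Proposition~\ref{prop:H=sum(H)} creates an artificial parity problem that the paper avoids entirely, and the cases $k-j\in\{4,6\}$ that you flag as ``the hard part'' contain a genuine gap: they cannot be closed with the tools you propose. Subadditivity, monotonicity, the reduction~\eqref{eq:awesome}, and your derived inequality $h(X_a,X_{a+1})\le h(X_a,X_{a+2})$ are all Shannon inequalities weaker than submodularity, and submodularity (conditional mutual information $\ge 0$) is indispensable here. Concretely, for $k-j=4$ consider the set function on $2^{\{0,1,2,3,4\}}$ given by $h(\emptyset)=0$, $h(\{i\})=1$, $h(S)=2$ for all $2$-element $S$ and for $S\in\{\{0,1,2\},\{1,2,3\},\{2,3,4\}\}$, and $h(S)=3$ for every other $S$ with $|S|\ge 3$. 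This $h$ satisfies~\eqref{eq:awesome}, is monotone and subadditive, and is consistent with $h(a,a+1)\le h(a,a+2)$, yet $H_0^4 = h(0,1,2,3)+h(1,2,3,4)=6>5=\sum_i h(i)$. It violates submodularity: $h(0,1,2)+h(1,2,3)=4<5=h(0,1,2,3)+h(1,2)$, i.e.\ the ``inequality'' $I(X_0;X_3\mid X_1,X_2)\ge 0$ fails. So no chain of your listed inequalities can prove the $k-j=4$ case.

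The paper's inductive step takes a different and simpler route: peel off one index at a time, proving $H_j^k \le H_j^{k-1}+h(k)$ for every $k\ge j+4$. This follows by adding the functional bound $h(j+1,\dots,k)=h(j+1,\dots,k-2,k)\le h(j+1,\dots,k-2)+h(k)$ to the conditional mutual information inequality $I(X_j;X_{k-1}\mid X_{j+1},\dots,X_{k-2})\ge 0$, which unfolds via~\eqref{eq:conditional_mutual_information} to $h(j,\dots,k-2)+h(j+1,\dots,k-1)\ge h(j,\dots,k-1)+h(j+1,\dots,k-2)$. This one-step recursion works uniformly for all $k-j\ge 4$ with no parity case split. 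To repair your proof, replace the split at $j+3$ and the speculative treatment of $k-j\in\{4,6\}$ with this single-step descent; your base case then finishes the induction.
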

\begin{proof}
We prove this by induction on $(k-j)$. Recall that $h(i_1,i_2,i_3, \ldots ) = H(X_{i_1},X_{i_2},X_{i_3}, \ldots )$.
	\begin{itemize}
	\item {\bf Base case: }$k=j+3$. Since $X_{j+1} = f_{j+1}(X_j,X_{j+2})$ and $X_{j+2} = f_{j+2}(X_{j+1},X_{j+3})$ we have 
		\begin{align*}
		h(j,j+1,j+2) = h(j,j+2) & \leq h(j)+h(j+2) \\ \mbox{and} \qquad 
		h(j+1,j+2,j+3) = h(j+1,j+3) & \leq h(j+1)+h(j+3) ,
		\end{align*}
	respectively. Adding these together yields:
		$$ H_j^{j+3} = h(j,j+1,j+2) + h(j+1,j+2,j+3) \leq h(j)+h(j+1)+h(j+2)+h(j+3). $$
	\item {\bf Inductive step:} $k \geq j+4$. Since $X_{k-1} = f_{k-1}(X_{k-2},X_k)$ we have 
		\begin{align*}
		h(j+1,j+2, \ldots ,k) 
		& = h(j+1,j+2, \ldots ,k-2,k) \\
		& \leq h(j+1,j+2, \ldots ,k-2)+h(k). 
		\end{align*}
	By Proposition~\ref{prop:known_entropy_results}, 
	$ I(X_j ; X_{k-1} | X_{j+1},X_{j+2}, \ldots ,X_{k-2})\geq 0$. Adding these together yields  
		$$ H_j^k \leq H_j^{k-1} + h(k). $$
	This completes the proof. 
	\end{itemize}
\end{proof}

\begin{lem}
\label{lem:H(X)=sum(H)}
Let $\Pr$ be a non-trivial protocol on $C_n$ with $s \geq 2$ colours and let $X\in_u \fix(\Pr)$. Suppose $1 = d(1),d(2),d(3), \ldots ,d(k) = n$ is a sequence of positive integers with $k \geq 2$. If $d(i+1) \geq d(i)+2$ for all $i$, then 
	$$ 2\log_s\fix(\Pr) = H_{d(1)}^{d(2)} + H_{d(2)+1}^{d(3)} + \cdots + H_{d(k-1)+1}^{d(k)}. $$
\end{lem}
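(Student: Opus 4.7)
The plan is to identify both sides of the claimed identity with $H_1^n = 2H(X) = 2\log_s\fix(\Pr)$, using the cyclic structure of $C_n$ together with Proposition~\ref{prop:H=sum(H)} and subadditivity of entropy.

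First I would observe that $2\log_s\fix(\Pr) = H_1^n$. Since $C_n$ is cyclic, $X_n = f_n(X_{n-1},X_1)$, so identity~(\ref{eq:awesome}) gives $h(1,2,\ldots,n-1) = H(X_1,\ldots,X_n) = H(X)$; symmetrically $X_1 = f_1(X_n,X_2)$ gives $h(2,3,\ldots,n) = H(X)$. By Lemma~\ref{lem:H(X),h(i)}, $H(X) = \log_s\fix(\Pr)$, so summing yields $H_1^n = 2\log_s\fix(\Pr)$.

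Second, for the inequality $H_1^n \leq \mathrm{RHS}$, I would iterate Proposition~\ref{prop:H=sum(H)} at the split points $j = d(2), d(3), \ldots, d(k-1)$. The gap condition $d(i+1) \geq d(i)+2$ verifies the hypothesis $j+1 < k$ of the proposition at each step, producing
$$ H_1^n \;\leq\; H_{d(1)}^{d(2)} + H_{d(2)+1}^{d(3)} + \cdots + H_{d(k-1)+1}^{d(k)}. $$

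Third, for the matching reverse direction, I would unpack each $H_a^b = h(a,\ldots,b-1) + h(a+1,\ldots,b)$ and regroup the right-hand side as $S_A + S_B$, where $S_A$ collects the first summands and $S_B$ the seconds; each is a sum of entropies over pairwise disjoint intervals of consecutive indices. Subadditivity of entropy gives $S_A \geq H(X_{\mathcal{I}_A})$ and $S_B \geq H(X_{\mathcal{I}_B})$, where $\mathcal{I}_A = \{1,\ldots,n-1\}\setminus\{d(2),\ldots,d(k-1)\}$ and $\mathcal{I}_B = \{2,\ldots,n\}\setminus\{d(2)+1,\ldots,d(k-1)+1\}$. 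The crux is that every index missing from $\mathcal{I}_A$ (namely $n$ and each $d(j)$ for $2\le j\le k-1$) has both cyclic neighbours in $\mathcal{I}_A$: for $n$ these are $n-1$ and $1$, and for $d(j)$ they are $d(j)\pm 1$, which avoid $\{d(2),\ldots,d(k-1)\}$ thanks to the gap condition; the analogous check holds for $\mathcal{I}_B$ with the roles of $1$ and $n$ swapped. Since each such $X_m = f_m(X_{m-1},X_{m+1})$, identity~(\ref{eq:awesome}) lets us reinstate the missing coordinates without changing entropy, so $H(X_{\mathcal{I}_A}) = H(X_{\mathcal{I}_B}) = H(X)$. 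Hence $\mathrm{RHS} = S_A + S_B \geq 2H(X) = H_1^n$, and combined with the upper bound from Step~2 this forces equality throughout.

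The main obstacle is the index bookkeeping in the third step: verifying carefully, using the gap condition $d(i+1)\geq d(i)+2$ together with the cyclic identifications at the wraparound indices $1$ and $n$, that every index missing from $\mathcal{I}_A$ or $\mathcal{I}_B$ retains both of its cyclic neighbours, so that repeated application of identity~(\ref{eq:awesome}) collapses each $H(X_{\mathcal{I}_\bullet})$ all the way down to $H(X)$.
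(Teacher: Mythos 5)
Your Steps 1 and 2 correctly give $H_1^n = 2\log_s\fix(\Pr)$ and then $H_1^n \le H_{d(1)}^{d(2)} + \cdots + H_{d(k-1)+1}^{d(k)}$ by iterating Proposition~\ref{prop:H=sum(H)} at the split points $d(2),\ldots,d(k-1)$; the gap condition makes each application legitimate. This is the paper's own argument, just written as iteration rather than as induction on $k$. Step~3, however, does not supply a reverse direction. Subadditivity bounds $S_A$ and $S_B$ \emph{from below} by joint entropies, so after reinstating the dropped indices via~(\ref{eq:awesome}) your chain reads $\text{RHS} = S_A + S_B \ge H(X_{\mathcal{I}_A}) + H(X_{\mathcal{I}_B}) = 2H(X) = H_1^n$ --- which is the \emph{same} inequality as Step~2, not its converse. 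Step~2 is a lower bound on the right-hand side (an upper bound on $H_1^n$), so invoking it as ``the upper bound from Step~2'' and squeezing is a sign error: you have established $H_1^n \le \text{RHS}$ twice and never established $H_1^n \ge \text{RHS}$.

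No argument could, because the stated equality is in fact false. Take $n=5$, $s=2$, and $f_i(c_{i-1},c_{i+1})=c_{i-1}$ for all $i$; then $\Fix(\Pr)$ consists of the two constant colourings, so $h(S)=1$ for every nonempty $S$ and $\log_s\fix(\Pr)=1$, yet with $d=(1,3,5)$ one gets $H_1^3 + H_4^5 = 2+2 = 4 \ne 2 = 2\log_s\fix(\Pr)$. The ``$=$'' in the lemma should read ``$\le$''; the paper's own inductive step likewise writes ``$=$'' where Proposition~\ref{prop:H=sum(H)} yields only ``$\le$'', and the lemma's sole downstream use (in Lemma~\ref{lem:many_non-semi-perfect_functions}) applies it as an inequality. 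With that correction, your Steps~1--2 are a complete proof and Step~3 should simply be dropped.
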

\begin{proof} 
We proceed by induction on $k$. 
\begin{itemize}
\item {\bf Base case:} $k=2$. Since $X_1 = f_1(X_n,X_2)$ and $X_n = f_n(X_{n-1},X_1)$, we have 
	$$ H(X) = h(2,3,4, \ldots ,n) \qquad \mbox{and} \qquad H(X) = h(1,2,3, \ldots ,n-1) ,$$
respectively. Adding these together gives $H_1^n = 2H(X) = 2\fix(\Pr)$.
\item {\bf Inductive step.} By Proposition \ref{prop:H=sum(H)}, for any $d(k-1)+2 \leq d(k) \leq n-2$, we have 
	$$ H_{d(k-1)+1}^{n} = H_{d(k-1)+1}^{d(k)} + H_{d(k)+1}^{n}. $$
\end{itemize}
\end{proof}

\section{The Fractional-Clique-Partition Protocol}
\label{sec:frac}

In this section, we define the fractional-clique-partition protocol, $\Pr_{fcp}$, on odd cycles $C_n$ with $s\geq 2$ colours. Theorem \ref{thm:n/2} appears in \cite{Christofides2011guessing} and serves as a good upper bound for any $n \geq 4$ and all numbers of colours. 

\begin{defn}[Factorization bijection]
\label{defn:phi_and_psi}
It is easy to see that for any factorization $ab=s$, there exists a bijection between $\Z_s$ and $\Z_a \times \Z_b$. Let $\phi(z) \times \psi(z)$ be such a bijection. For ease of notation, $a$ and $b$ are assumed to be given in context. Let $\pi$ be the inverse of this bijection, so that $\pi(\phi(z),\psi(z))=z$ for all $z \in \Z_s$. 
\end{defn}

\begin{defn}[Fractional-clique-partition protocol]
\label{defn:P_fcp}
Let $n \geq 3$ be an odd integer, let $s$ be a positive integer, let $a$ be the greatest factor of $s$ less than or equal to $\sqrt{s}$ and let $b = s/a$. For any colouring $c = (c_1,c_2, \ldots ,c_n) \in \Z_s^n$, let $\phi(c_i)$ and $\psi(c_i)$ be referred to as the first and second coordinates respectively of vertex $v_i$. The \emph{fractional-clique-partition protocol} is the protocol $\Pr_{fcp} = (f_1,f_2, \ldots ,f_n)$ on $C_n$ defined by:
	\begin{align*}
	f_i(c_{i-1},c_{i+1}) & = \pi \big( \phi(c_{i-1}) , \psi(c_{i+1}) \big) & \mbox{for } i=2,4,6, \ldots ,n-1 \\
	f_i(c_{i-1},c_{i+1}) & = \pi \big( \phi(c_{i+1}) , \psi(c_{i-1}) \big) & \mbox{for } i=3,5,7, \ldots ,n-2 \\
	f_1(c_n,c_2) & = \pi \big( \phi(c_2) , \phi(c_n) \big) & \mbox{and} \\
	f_n(c_{n-1},c_1) & = \pi \big( \psi(c_1) (\mbox{mod } a), \psi(c_{n-1}) \big).
	\end{align*}
\end{defn}

Informally, vertices $v_{2k-1}$ and $v_{2k}$ are copying each others first coordinate and vertices $v_{2k}$ and $v_{2k+1}$ are copying each others second coordinate (for $k = 1,2,3, \ldots , (n-1)/2$). Additionally, the second coordinate of vertex $v_1$ and the first coordinate of vertex $v_n$ copy each other as much as possible - whenever the second coordinate of vertex $v_1$ is less than $a$. An example of $\Pr_{fcp}$ on $C_7$ is illustrated in Figure \ref{fig:57cycle}.

\begin{figure}
\begin{center}
\begin{tikzpicture}[thick,scale=0.8, every node/.style={scale=0.8}]
\draw[fill=black!10,black!10] (0*360/7:3.5) circle (1);
\draw[fill=black!10,black!10] (1*360/7:3.5) circle (1);
\draw[fill=black!10,black!10] (2*360/7:3.5) circle (1);
\draw[fill=black!10,black!10] (3*360/7:3.5) circle (1);
\draw[fill=black!10,black!10] (4*360/7:3.5) circle (1);
\draw[fill=black!10,black!10] (5*360/7:3.5) circle (1);
\draw[fill=black!10,black!10] (6*360/7:3.5) circle (1);
\draw[black!50,line width=.5pt] (0*360/7:3) -- (1*360/7:3);
\draw[black!50,line width=.5pt] (0*360/7:4) -- (1*360/7:3);
\draw[black!50,line width=.5pt] (0*360/7:4) -- (1*360/7:4);
\draw[black!100,line width=3pt] (0*360/7:3) -- (1*360/7:4);
\draw[black!50,line width=.5pt] (1*360/7:3) -- (2*360/7:4);
\draw[black!50,line width=.5pt] (1*360/7:4) -- (2*360/7:3);
\draw[black!50,line width=.5pt] (1*360/7:4) -- (2*360/7:4);
\draw[densely dotted,red!70,line width=3pt] (1*360/7:3) -- (2*360/7:3);
\draw[black!50,line width=.5pt] (2*360/7:3) -- (3*360/7:4);
\draw[black!50,line width=.5pt] (2*360/7:4) -- (3*360/7:3);
\draw[black!50,line width=.5pt] (2*360/7:3) -- (3*360/7:3);
\draw[loosely dotted,blue!70,line width=3pt] (2*360/7:4) -- (3*360/7:4);
\draw[black!50,line width=.5pt] (3*360/7:3) -- (4*360/7:4);
\draw[black!50,line width=.5pt] (3*360/7:4) -- (4*360/7:3);
\draw[black!50,line width=.5pt] (3*360/7:4) -- (4*360/7:4);
\draw[densely dotted,red!70,line width=3pt] (3*360/7:3) -- (4*360/7:3);
\draw[black!50,line width=.5pt] (4*360/7:3) -- (5*360/7:4);
\draw[black!50,line width=.5pt] (4*360/7:4) -- (5*360/7:3);
\draw[black!50,line width=.5pt] (4*360/7:3) -- (5*360/7:3);
\draw[loosely dotted,blue!70,line width=3pt] (4*360/7:4) -- (5*360/7:4);
\draw[black!50,line width=.5pt] (5*360/7:3) -- (6*360/7:4);
\draw[black!50,line width=.5pt] (5*360/7:4) -- (6*360/7:3);
\draw[black!50,line width=.5pt] (5*360/7:4) -- (6*360/7:4);
\draw[densely dotted,red!70,line width=3pt] (5*360/7:3) -- (6*360/7:3);
\draw[black!50,line width=.5pt] (6*360/7:3) -- (7*360/7:4);
\draw[black!50,line width=.5pt] (6*360/7:4) -- (7*360/7:3);
\draw[black!50,line width=.5pt] (6*360/7:3) -- (7*360/7:3);
\draw[loosely dotted,blue!70,line width=3pt] (6*360/7:4) -- (7*360/7:4);
\draw[fill=black!80] (0*360/7:3) circle (.15);
\draw[fill=black!80] (1*360/7:3) circle (.15);
\draw[fill=black!80] (2*360/7:3) circle (.15);
\draw[fill=black!80] (3*360/7:3) circle (.15);
\draw[fill=black!80] (4*360/7:3) circle (.15);
\draw[fill=black!80] (5*360/7:3) circle (.15);
\draw[fill=black!80] (6*360/7:3) circle (.15);
\draw[fill=black!80] (0*360/7:4) circle (.15);
\draw[fill=black!80] (1*360/7:4) circle (.15);
\draw[fill=black!80] (2*360/7:4) circle (.15);
\draw[fill=black!80] (3*360/7:4) circle (.15);
\draw[fill=black!80] (4*360/7:4) circle (.15);
\draw[fill=black!80] (5*360/7:4) circle (.15);
\draw[fill=black!80] (6*360/7:4) circle (.15);
\node [right] at (0*360/7:4.1) {$\psi(c_7)$};
\node [left] at (0*360/7:2.9) {$\phi(c_7)$};
\node [above right] at (1*360/7:4) {$\psi(c_1)$};
\node [below left] at (1*360/7:3) {$\phi(c_1)$};
\node [above] at (2*360/7:4.1) {$\psi(c_2)$};
\node [below] at (2*360/7:2.9) {$\phi(c_2)$};
\node [below right] at (6*360/7:4) {$\psi(c_6)$};
\node [above left] at (6*360/7:3) {$\phi(c_6)$};
\end{tikzpicture}
\caption[]{The protocol $\Pr_{fcp}$ on $C_7$ with $s=ab$ colours, where $a<b$. Each vertex $v_i$ is subdivided into two nodes representing the first and second components ($\phi(c_i)$ and $\psi(c_i)$, respectively). The red edges (\tikz[baseline=-0.5ex]{ \draw[densely dotted,red!70,line width=3pt] (0,0) -- (.5,0); }) represent pairs of first-components that are copying each other. The blue edges (\tikz[baseline=-0.5ex]{ \draw[loosely dotted,blue!70,line width=3pt] (0,0) -- (.5,0); }) represent pairs of second-components that are copying each other. The black edge (\tikz[baseline=-0.5ex]{ \draw[black!100,line width=3pt] (0,0) -- (.5,0); }) joins a first-component ($\phi(c_n)$) and a second-component ($\psi(c_1)$) which are copying each other as much as possible. For a colouring $c \in \Fix(\Pr_{fcp})$ on $C_7$, there are $a$ different choices for each red edge, $b$ different choices for each blue edge and $a$ different choices for the black edge. Therefore, $\fix(\Pr_{fcp}) = a^4b^3=as^3$ for $n=7$.}
\label{fig:57cycle}
\end{center}
\end{figure}

\begin{prop}
\label{prop:fix_Ffcp}
For a given integer $s \geq 2$ and odd integer $n \geq 3$, if $a$ is the greatest factor of $s$ less than or equal to $\sqrt{s}$, then we have $\fix(\Pr_{fcp}) = as^{(n-1)/2}$. 
\end{prop}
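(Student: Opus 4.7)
The plan is to translate each function $f_i$ in Definition~\ref{defn:P_fcp} into coordinate-wise constraints on $(\phi(c_i), \psi(c_i)) \in \Z_a \times \Z_b$, observe that these constraints couple vertices into disjoint pairs, and then multiply together the number of free choices in each coordinate. Since $\pi$ is a bijection, the condition $c_i = f_i(c_{i-1}, c_{i+1})$ is equivalent to a pair of equations, one on the $\phi$-coordinate and one on the $\psi$-coordinate.

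First I would unpack the four cases. For even $i \in \{2,4,\dots,n-1\}$, the equation $c_i = \pi(\phi(c_{i-1}), \psi(c_{i+1}))$ gives $\phi(c_i) = \phi(c_{i-1})$ and $\psi(c_i) = \psi(c_{i+1})$; for odd $i \in \{3,5,\dots,n-2\}$ it gives $\phi(c_i) = \phi(c_{i+1})$ and $\psi(c_i) = \psi(c_{i-1})$. The boundary rule at $v_1$ gives $\phi(c_1) = \phi(c_2)$ and $\psi(c_1) = \phi(c_n)$, and the boundary rule at $v_n$ gives $\phi(c_n) = \psi(c_1) \bmod a$ and $\psi(c_n) = \psi(c_{n-1})$.

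Next I would collect the $\phi$-constraints and $\psi$-constraints separately. The $\phi$-constraints partition $\{1,2,\dots,n-1\}$ into the $(n-1)/2$ pairs $\{1,2\},\{3,4\},\dots,\{n-2,n-1\}$, each requiring a single common first coordinate in $\Z_a$. The $\psi$-constraints likewise partition $\{2,3,\dots,n\}$ into the $(n-1)/2$ pairs $\{2,3\},\{4,5\},\dots,\{n-1,n\}$, each requiring a single common second coordinate in $\Z_b$. The only remaining constraints are the two cross-constraints $\psi(c_1) = \phi(c_n)$ and $\phi(c_n) = \psi(c_1) \bmod a$; combining these forces $\psi(c_1) = \psi(c_1) \bmod a$, so $\psi(c_1)$ must lie in $\{0,1,\dots,a-1\}$, and then $\phi(c_n)$ is determined by $\psi(c_1)$.

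Finally I would just multiply: $a^{(n-1)/2}$ choices for the common $\phi$-values across the $(n-1)/2$ pairs, $b^{(n-1)/2}$ choices for the common $\psi$-values across the $(n-1)/2$ pairs, and $a$ choices for the tied quantity $\psi(c_1) = \phi(c_n)$. This gives
\[
\fix(\Pr_{fcp}) \;=\; a^{(n-1)/2} \cdot b^{(n-1)/2} \cdot a \;=\; a \cdot (ab)^{(n-1)/2} \;=\; a\, s^{(n-1)/2},
\]
as required. The only subtlety, and the step I would be most careful about, is checking the two boundary rules at $v_1$ and $v_n$ are mutually consistent and that they contribute exactly a factor of $a$ rather than being absorbed by the other constraints; the computation above shows both cross-constraints collapse to a single choice in $\Z_a$.
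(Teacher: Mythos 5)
Your proof is correct and follows essentially the same counting argument as the paper: both translate $c = \Pr_{fcp}(c)$ into pairwise $\phi$- and $\psi$-coordinate constraints yielding $(n-1)/2$ free choices in $\Z_a$, $(n-1)/2$ free choices in $\Z_b$, plus a factor $a$ from the boundary coupling of $\psi(c_1)$ and $\phi(c_n)$. You simply make explicit the unpacking of $\pi$ into two coordinate equations per vertex and the verification that the two boundary rules are consistent, which the paper's proof leaves implicit.
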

\begin{proof}
Let $n = 2k+1$. We count the number of colourings of $C_n$ for which the protocol $\Pr_{fcp}$ guesses correctly. For any colouring $c \in \Fix(\Pr_{fcp})$, there are $k$ pairs of vertices copying each other's first coordinates and there are $a$ different choices for $\phi$ for each pair. Similarly, for each of the $k$ pairs of vertices copying each other's second coordinates, there are $b$ different choices for $\psi$. This yields $a^kb^k$ possibilities. Additionally, the first coordinate of vertex $v_n$ must equal the second coordinate of vertex $v_1$, for which there are $a$ possible colours. Multiplying these together yields 
	$$ \fix(\Pr_{fcp}) = a^{k+1}b^k = as^{(n-1)/2}. $$
\end{proof}

\begin{thm}
\label{thm:n/2} \cite{Christofides2011guessing}
For any integer $n \geq 4$, we have $\displaystyle{\gn(C_n,s) \leq \tfrac{n}{2}}$, with equality only if for any optimal protocol, $\Pr$ the following is satisfied. If $X \in_u \Fix(\Pr)$ then $H(X_i)=1$ for all $i$. 
\end{thm}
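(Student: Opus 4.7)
The plan is to chain the three preceding lemmas applied to $X\in_u \Fix(\Pr)$, where $\Pr$ is an optimal protocol on $C_n$. First I would note that any optimal protocol is non-trivial (the constant-zero protocol already fixes the all-zero colouring), so $X\in_u\Fix(\Pr)$ is well defined. Then I would invoke Lemma~\ref{lem:H(X)=sum(H)} with the shortest admissible sequence $d(1)=1$, $d(2)=n$ (the gap condition $d(2)\ge d(1)+2$ is satisfied since $n\ge 4$), which directly identifies
$$2\log_s\fix(\Pr)=H_1^n.$$

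Next I would apply Lemma~\ref{lem:H_j^k<sum(h(i))} with $j=1$ and $k=n$. Its hypotheses $j\le 1$ and $j+3\le n$ hold under $n\ge 4$, yielding $H_1^n\le\sum_{i=1}^n H(X_i)$. By Lemma~\ref{lem:H(X),h(i)} each summand $H(X_i)=h(i)$ is at most $1$, so
$$2\log_s\fix(\Pr)\;\le\;\sum_{i=1}^n H(X_i)\;\le\; n,$$
which gives $\gn(C_n,s)\le n/2$ after dividing by $2$.

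For the equality clause, I would argue that if $\gn(C_n,s)=n/2$ then every inequality in the displayed chain must be tight. In particular the final step $\sum_{i=1}^n H(X_i)\le n$ is an equality; since each summand is at most $1$, this forces $H(X_i)=1$ for every $i$, as claimed in the statement.

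There is essentially no obstacle: the three preceding lemmas were engineered precisely to make this cycle-entropy bound immediate. The only genuine care is verifying the two index hypotheses ($n\ge 3$ for the base case of Lemma~\ref{lem:H(X)=sum(H)} and $n\ge 4$ for Lemma~\ref{lem:H_j^k<sum(h(i))} at $j=1$, $k=n$), both of which are guaranteed by the standing assumption $n\ge 4$.
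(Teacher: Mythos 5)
Your proposal is correct and follows essentially the same route as the paper: chain Lemma~\ref{lem:H(X)=sum(H)} (with the trivial decomposition $d(1)=1$, $d(2)=n$) to get $2\log_s\fix(\Pr)=H_1^n$, bound $H_1^n$ by $\sum_i h(i)$ via Lemma~\ref{lem:H_j^k<sum(h(i))}, bound each $h(i)$ by $1$ via Lemma~\ref{lem:H(X),h(i)}, and read off equality forcing $h(i)=1$ for all $i$. The only small addition you make is the explicit remark that optimal protocols are non-trivial (the zero protocol fixes the zero colouring), which the paper leaves implicit.
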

\begin{proof}
Let $\Pr$ be an optimal protocol on $C_n$ with $s$ colours. By Lemmas~\ref{lem:H(X),h(i)},~\ref{lem:H_j^k<sum(h(i))} and ~\ref{lem:H(X)=sum(H)}, we have 
	$$ \gn(C_n,s) = \log_s\fix(\Pr) = H(X) = \tfrac{1}{2} H_1^n \leq \tfrac{1}{2} \sum_{i=1}^n h(i) \leq \frac{n}{2}. $$
If $\gn(C_n,s)=n/2$, then we must have equality throughout, which means that $h(i)=1$ for all $i$.
\end{proof}

Theorem~\ref{thm:n/2} appears in \cite{Christofides2011guessing}. This same paper also shows that the limit of $\gn(C_n,s) \to n/2$ as $s \rightarrow \infty$. We give a bound on the rate convergence to this limit in Theorem~\ref{thm:round_up}. 

\begin{thm}
\label{thm:round_up}
If $n$ is odd and $s = m^2-t$ for integers $m$ and $t \geq 0$ then there exists a protocol $\Pr$ on $C_n$ with $s$ colours such that 
	$$ \fix(\Pr) \geq s^{n/2} \left( 1 - \frac{tn}{s} \right). $$
\end{thm}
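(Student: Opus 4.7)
The plan is to construct $\Pr$ by ``restricting'' the optimal protocol with $m^2$ colours down to $s$ colours via Proposition~\ref{prop:monotone}, and to bound the loss using entropy. By (\ref{eq:s_square}) there is an optimal $\Pr_0$ on $C_n$ with $m^2$ colours having $\fix(\Pr_0) = m^n = (m^2)^{n/2}$. Letting $X \in_u \Fix(\Pr_0)$, Theorem~\ref{thm:n/2} (applied with $s$ replaced by $m^2$) gives $H(X_i) = 1$ for every $i$, where entropies are base $m^2$. Since $X_i$ takes values in a set of size $m^2$, Proposition~\ref{prop:known_entropy_results}(1) promotes this entropy equality to uniformity: each $X_i$ is uniformly distributed on $\Z_{m^2}$.

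Next I would apply Proposition~\ref{prop:monotone} with $s^\prime = s$ to obtain a protocol $\Pr^\prime$ on $C_n$ with $s$ colours whose fixed set contains every $c \in \Fix(\Pr_0)$ with all coordinates in $\{0, 1, \ldots , s-1\}$. A union bound using the uniform marginals yields
$$ \P\bigl(\exists i : X_i \geq s\bigr) \leq \sum_{i=1}^n \P(X_i \geq s) = \frac{nt}{m^2}, $$
so $\fix(\Pr^\prime) \geq m^n\bigl(1 - nt/m^2\bigr) = m^{n-2}(m^2 - nt)$. It then remains to verify $m^{n-2}(m^2 - nt) \geq s^{(n-2)/2}(s - nt) = s^{n/2}\bigl(1 - nt/s\bigr)$, which is trivial when $s \leq nt$ (the right side is non-positive), and otherwise follows from $m^2 \geq s$ applied factor-by-factor on the left.

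The main (and really only non-routine) obstacle is the entropy-to-uniformity promotion in the first paragraph: without this sharp equality condition from Proposition~\ref{prop:known_entropy_results}(1), the union bound would not yield a loss of exactly $nt/m^2$, and one could not recover the $s^{n/2}(1 - nt/s)$ bound. Everything else --- invoking (\ref{eq:s_square}) to start from an optimum, the monotonicity of Proposition~\ref{prop:monotone}, and the final comparison of two polynomial expressions --- is bookkeeping.
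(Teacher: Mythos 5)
Your proof is correct and takes essentially the same approach as the paper's: both restrict an optimal $m^2$-colour protocol to $s$ colours via Proposition~\ref{prop:monotone}, use Theorem~\ref{thm:n/2} together with the equality case of Proposition~\ref{prop:known_entropy_results}(1) to get uniform marginals, and then apply a union bound. The paper picks $\Pr_{fcp}$ as its concrete optimal $m^2$-colour protocol and writes the final inequality as $(s+t)^{n/2}\bigl(1-tn(s+t)^{-1}\bigr)\geq s^{n/2}(1-tn/s)$, while you invoke (\ref{eq:s_square}) for an abstract optimum and write $m^{n-2}(m^2-nt)\geq s^{(n-2)/2}(s-nt)$; these are the same computation.
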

\begin{proof}
Consider the protocol $\Pr^\prime = \Pr_{fcp}$ on $C_n$ with $s^\prime = m^2$ colours and let $X^\prime \in_u \Fix(\Pr^\prime)$. By Theorem~\ref{thm:n/2}, we must have $H(X_i^\prime)=1$ and therefore $X_i^\prime$ is uniformly distributed over $\Z_{s^\prime}$ for all $i$. By the union bound, 
	$$ \P\left( X_i^\prime<s \; \forall \: i \right) \geq 1 - \sum_{i=1}^n \P(X_i^\prime \geq s) = 1 - \sum_{i=1}^n \frac{t}{m^2} = 1 - \frac{tn}{m^2}. $$
Now, let $\Pr$ be a protocol on $C_n$ with $s$ colours such that $c \in \Fix(\Pr)$ for all colourings $c \in \Fix(\Pr^\prime)$ such that $c_i < s$ for all $i$ (such a protocol must exist by Proposition~\ref{prop:monotone}). For this protocol,  
	\begin{align*} \fix(\Pr) 
	& \geq \fix(\Pr^\prime) \: \P\left( X_i^\prime<s \; \forall \: i \right) \\
	& \geq \fix(\Pr^\prime) \left( 1 - \frac{tn}{m^2} \right) \\
	& = (s+t)^{n/2} \left( 1 - tn(s+t)^{-1} \right) \\ 
	& \geq s^{n/2} \left( 1 - \frac{tn}{s} \right).
	\end{align*}
\end{proof}
\begin{cor}
If $n \geq 4$ then $\gn(C_n,s) = \frac{n}{2} - \O\left(\frac{n}{\sqrt{s}\log_es} \right)$ as $s \rightarrow \infty$.
\end{cor}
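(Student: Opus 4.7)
The plan is to combine the upper bound already supplied by Theorem~\ref{thm:n/2}, namely $\gn(C_n,s) \leq n/2$, with a matching lower bound extracted from Theorem~\ref{thm:round_up}. So only the lower bound requires work.

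First I would choose $m = \lceil \sqrt{s}\, \rceil$ and set $t = m^2 - s$, so that the hypothesis $s = m^2 - t$ of Theorem~\ref{thm:round_up} is met with $t \geq 0$ an integer. Since $m \leq \sqrt{s}+1$, we have $t = m^2 - s \leq 2\sqrt{s}+1$, so $t = \O(\sqrt{s})$ as $s \to \infty$. Plugging into Theorem~\ref{thm:round_up} yields a protocol $\Pr$ on $C_n$ with
\[
\fix(\Pr) \;\geq\; s^{n/2}\Bigl(1 - \frac{tn}{s}\Bigr) \;\geq\; s^{n/2}\Bigl(1 - \frac{Cn}{\sqrt{s}}\Bigr)
\]
for some absolute constant $C$.

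Taking $\log_s$ of both sides gives $\gn(C_n,s) \geq \tfrac{n}{2} + \log_s\!\bigl(1 - Cn/\sqrt{s}\bigr)$. For $s$ large enough (depending on $n$) that $Cn/\sqrt{s} \leq 1/2$, the elementary inequality $\ln(1-x) \geq -2x$ on $[0,1/2]$, together with the change-of-base identity $\log_s y = \ln y / \ln s$, gives
\[
\log_s\!\Bigl(1 - \frac{Cn}{\sqrt{s}}\Bigr) \;\geq\; -\,\frac{2Cn}{\sqrt{s}\,\log_e s}.
\]
This is exactly an error of order $\O\!\bigl(n/(\sqrt{s}\,\log_e s)\bigr)$, matching the statement. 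The upper bound $\gn(C_n,s) \leq n/2$ from Theorem~\ref{thm:n/2} closes the gap.

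There is no real obstacle here — the proof is essentially a mechanical asymptotic rewriting of Theorem~\ref{thm:round_up}. The only mildly delicate point is the change of base from natural logarithm to base $s$, which is what puts the extra $\log_e s$ factor into the denominator of the error term; the relevant approximation $\log_s(1-x) \sim -x/\ln s$ must be invoked only in a regime where $x = Cn/\sqrt{s}$ is small, which holds for all sufficiently large $s$ (with $n$ fixed) as required by the statement ``as $s \to \infty$''.
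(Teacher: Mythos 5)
Your proposal matches the paper's proof essentially step for step: choose $m$ to be the smallest integer with $m^2 \ge s$, note $t = m^2 - s = \O(\sqrt s)$, apply Theorem~\ref{thm:round_up}, take $\log_s$ and expand $\log_s(1-x) = \ln(1-x)/\ln s$ for small $x$ to extract the $\O\!\bigl(n/(\sqrt s\,\log_e s)\bigr)$ error, then pair with the upper bound from Theorem~\ref{thm:n/2}. The only (trivial) thing you skipped is that Theorem~\ref{thm:round_up} applies only to odd $n$; for even $n$ the paper simply invokes $\gn(C_n,s)=n/2$ exactly, so the error term is zero — worth one line, but not a substantive gap.
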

\begin{proof}
For even $n$ we have $\gn(C_n,s) = \frac{n}{2}$. For odd $n$, let $m$ be the smallest positive integer such that $m^2 \geq s$. This gives $t = m^2-s = \O(\sqrt{s})$. If $\Pr$ is the protocol constructed in Theorem~\ref{thm:round_up}, then 
	$$ \gn(C_n,s) \geq \log_s\fix(\Pr) \geq \frac{n}{2} + \log_s \left( 1 - \frac{tn}{s} \right) = \frac{n}{2} - \O\left( \frac{n}{\sqrt{s}\log_es} \right). $$
\end{proof}

\section{Entropy Results}
\label{sec:entropy}

The bounds in Theorem~\ref{thm:round_up} are only useful when $n$ is small relative to $s$. In contrast, the purpose of the results in this section is to establish Lemma~\ref{lem:many_non-semi-perfect_functions}, which in turn will be used to prove Theorem~\ref{thm:n_large} which only applies when $n$ is large relative to $s$. To help orientate the reader through this section (and the next), Figure~\ref{fig:proof_digraph} shows which results are used to prove other results.

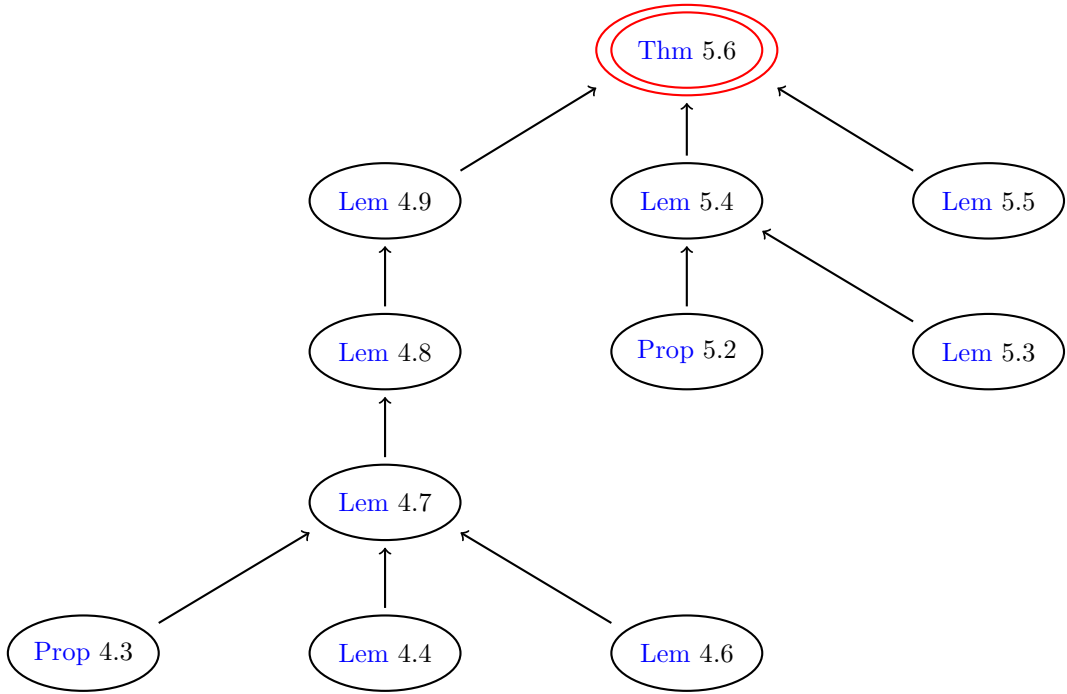
\begin{figure}
\begin{center}
\begin{tikzpicture}[thick]
\draw (0,0) ellipse (1.0 and 0.5);
\node [blue] at (0,0) {Prop~\ref{prop:delta}};
\draw [->] (1,0.4) --(3,1.6);
\draw (4,0) ellipse (1.0 and 0.5);
\node [blue] at (4,0) {Lem~\ref{lem:alpha}};
\draw [->] (4,0.6) --(4,1.4);
\draw (8,0) ellipse (1.0 and 0.5);
\node [blue] at (8,0) {Lem~\ref{lem:shannon}};
\draw [->] (7,0.4) --(5,1.6);
\draw (4,2) ellipse (1.0 and 0.5);
\node [blue] at (4,2) {Lem~\ref{lem:one_non-semi-perfect_function}};
\draw [->] (4,2.6) --(4,3.4);
\draw (4,4) ellipse (1.0 and 0.5);
\node [blue] at (4,4) {Lem~\ref{lem:at_least_one_non-semi-perfect_function}};
\draw [->] (4,4.6) --(4,5.4);
\draw (4,6) ellipse (1.0 and 0.5);
\node [blue] at (4,6) {Lem~\ref{lem:many_non-semi-perfect_functions}};
\draw [->] (5,6.4) --(6.8,7.5);
\draw [red] (8,8) ellipse (1.0 and 0.5);
\draw [red] (8,8) ellipse (1.2 and 0.6);
\node [blue] at (8,8) {Thm~\ref{thm:n_large}};
\draw [->] (11,6.4) --(9.2,7.5);
\draw (12,6) ellipse (1.0 and 0.5);
\node [blue] at (12,6) {Lem~\ref{lem:one_perfect_function}};
\draw [->] (8,6.6) --(8,7.3);
\draw (8,6) ellipse (1.0 and 0.5);
\node [blue] at (8,6) {Lem~\ref{lem:three_semi-perfect_functions}};
\draw [->] (11,4.4) --(9,5.6);
\draw (12,4) ellipse (1.0 and 0.5);
\node [blue] at (12,4) {Lem~\ref{lem:two_semi-perfect_functions}};
\draw [->] (8,4.6) --(8,5.4);
\draw (8,4) ellipse (1.0 and 0.5);
\node [blue] at (8,4) {Prop~\ref{prop:LR=s}};
\end{tikzpicture}
\caption{The structure of Sections~\ref{sec:entropy} and~\ref{sec:n_large}. An arrow $A \rightarrow B$ indicates that $A$ is used in the proof of $B$.}
\label{fig:proof_digraph}
\end{center}
\end{figure}

\begin{defn}[Flat function, semi-perfect function]
\label{defn:semi-perfect_function}
For any $z \in \Z_s$ and for any function $f : \Z_s^2 \rightarrow \Z_s$ let $f^{-1}(z) =  \{ (x,y) \: | \: f(x,y)=z \}$. The function $f$ is called \emph{flat} if and only if $|f^{-1}(z)| = s$ for all $z$. Let $U = (U_1,U_2) \in_u \Z_s^2$. A \emph{semi-perfect} function, $f$, is any flat function such that the $U_1$ and $U_2$ are conditionally independent given $f(U)$ (Definition~\ref{defn:entropy}), \emph{i.e.} 
	$$ I(U_1;U_2 \: | \: f(U)) = 0. $$
\end{defn}

\begin{defn}[$\mathbf{(k,\boldsymbol{\epsilon})}$-uniform]
\label{defn:(k,e)-uniform} 
For any positive integer $k$ and any $\epsilon >0$, a random variable $Y$ is called $(k,\epsilon)$-uniform if $Y$ takes values in a finite set $\mathcal{Y}$ with $|\mathcal{Y}| = k$ and, for any $y \in \mathcal{Y}$,
	$$ \left| \P(Y=y) - \frac{1}{k} \right| \leq \epsilon .$$

\end{defn}

\begin{prop}
\label{prop:delta}
For any integer $k \geq 2$, any integer $s \geq 2$ and any $\epsilon >0$, there exists $\delta >0$ such that, for any random variable $Y$ which takes $k$ distinct values, if $H(Y)$ is the entropy of $Y$ (base $s$), then 
	$$ H(Y) \geq \log_s k - \delta \qquad \implies \qquad Y \mbox{ is $(k,\epsilon)$-uniform.} $$
\end{prop}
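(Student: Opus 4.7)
The plan is to invoke compactness of the probability simplex together with continuity of the entropy function, leveraging the fact (Proposition~\ref{prop:known_entropy_results}(1)) that the uniform distribution is the \emph{unique} maximizer of $H$ over distributions on a fixed $k$-element alphabet.

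First I would fix $\mathcal{Y}$ with $|\mathcal{Y}|=k$ and let $\Delta_k \subseteq \mathbb{R}^k$ denote the simplex of probability distributions on $\mathcal{Y}$, which is compact. The map $p \mapsto H(p) = -\sum_{y} p(y) \log_s p(y)$ (with $0 \log_s 0 := 0$) is continuous on $\Delta_k$. For the given $\epsilon > 0$, I would then define
\[
K_\epsilon = \Big\{ p \in \Delta_k \; : \; \max_{y \in \mathcal{Y}} \big| p(y) - \tfrac{1}{k} \big| \geq \epsilon \Big\},
\]
a closed, hence compact, subset of $\Delta_k$ that does not contain the uniform distribution $u$. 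By continuity plus compactness, $H$ attains its supremum $M = \max_{p \in K_\epsilon} H(p)$ on $K_\epsilon$, and Proposition~\ref{prop:known_entropy_results}(1) forces $M < \log_s k$ because $u$ is the unique global maximizer.

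Setting $\delta := \log_s k - M > 0$ finishes the proof by contrapositive: if the distribution of $Y$ lies in $K_\epsilon$ then $H(Y) \leq M = \log_s k - \delta$, so the hypothesis $H(Y) \geq \log_s k - \delta$ forces the distribution of $Y$ out of $K_\epsilon$, which is precisely the statement that $Y$ is $(k,\epsilon)$-uniform. The argument has no real obstacle, the only non-trivial ingredient being the uniqueness of the entropy maximizer, already at hand. If a quantitative rate were ever needed, I would replace this step by Pinsker's inequality to obtain an explicit $\delta$ of order $\epsilon^2/\ln s$; but since the statement only asks for existence, the compactness route is shortest.
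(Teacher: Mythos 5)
Your proof is correct, but it takes a genuinely different route from the paper's. You argue by compactness and continuity: $K_\epsilon$ is a compact subset of the simplex that excludes the uniform distribution, so the maximum $M$ of $H$ over $K_\epsilon$ is attained at some $p^* \neq u$ and is therefore strictly below $\log_s k$ by the uniqueness clause in Proposition~\ref{prop:known_entropy_results}(1); taking $\delta = \log_s k - M$ then works. This is clean and short but non-constructive. The paper instead proceeds by direct estimation: it assumes $7k\epsilon < 1$, splits into the two cases $\P(Y=y) \geq \tfrac{1}{k}+\epsilon$ and $\P(Y=y) \leq \tfrac{1}{k}-\epsilon$, uses the observation that entropy is maximised by spreading the remaining mass as evenly as possible over the other $k-1$ values, and then applies the inequality $-\log_s(1-\gamma) \leq (\gamma + \tfrac{5}{9}\gamma^2)\log_e s$ for $|\gamma|\leq 1/7$ to obtain the explicit bound $H(Y) \leq \log_s k - \tfrac{k\epsilon^2}{3}\log_e s$. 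The paper's computation therefore produces a concrete $\delta$ of order $k\epsilon^2/\ln s$, which is essentially the Pinsker-type rate you allude to at the end. Both proofs are valid; yours is quicker to state, the paper's is self-contained and quantitative. One small gap in your write-up is the degenerate situation where $\epsilon$ is large enough that $K_\epsilon$ is empty (so $M$ is undefined); the conclusion is then vacuous, and you should either note this or, as the paper does, remark that it suffices to prove the claim for all sufficiently small $\epsilon$, since shrinking $\epsilon$ only strengthens the conclusion.
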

\begin{proof}
For each $k$, it suffices to show this for all small enough $\epsilon$. Assume $7k\epsilon < 1$. We prove the contrapositive:
\begin{itemize}
\item Suppose that $\P(Y=y) \geq \frac{1}{k} + \epsilon$ for at least one value $y$. Entropy is greatest when $Y$ is as uniformly distributed as possible. Therefore,
	\begin{align*} H(Y) 
	& = - \sum_i \P(Y=i) \log_s \P(Y=i) \\
	& \leq - \left( \tfrac{1}{k} + \epsilon \right) \log_s \left( \tfrac{1}{k}+\epsilon \right) 
	- (k-1) \left( \tfrac{1}{k} - \tfrac{\epsilon}{k-1} \right) \log_s \left( \tfrac{1}{k} - \tfrac{\epsilon}{k-1} \right) \\
	& = \log_s k - \left( \tfrac{1}{k} + \epsilon \right) \log_s( 1 + k\epsilon )
	- \left( \tfrac{k-1}{k} - \epsilon \right) \log_s \left( 1 - \tfrac{k\epsilon}{k-1} \right). 
\intertext{Since $0 < k\epsilon < \frac{1}{7}$, we can use the identity, $-\log_s(1-\gamma) \leq (\gamma + \frac{5}{9}\gamma^2)\log_es$ (valid for $|\gamma| \leq 1/7$), to simplify this expression.}
	H(Y) 
	& \leq \log_sk - \frac{k\epsilon^2}{9} \left( 4 - 5k\epsilon + \tfrac{4}{k-1} + \tfrac{5k\epsilon}{(k-1)^2} \right)\log_es \\
	& \leq \log_sk - \frac{k\epsilon^2}{3}\log_es.
	\end{align*}
\item Now suppose $\P(Y=y) \leq \frac{1}{k} - \epsilon$ for at least one value $y$. The entropy would be greatest when $Y$ is as uniformly distributed as possible. Therefore, 
	\begin{align*} H(Y) 
	& = - \sum_i \P(Y=i) \log_s \P(Y=i) \\
	& \leq - \left( \tfrac{1}{k} - \epsilon \right) \log_s \left( \tfrac{1}{k} - \epsilon \right) 
	-(k-1) \left( \tfrac{1}{k} + \tfrac{\epsilon}{k-1} \right) \log_s \left( \tfrac{1}{k} - \tfrac{\epsilon}{k-1} \right) \\
	& = \log_sk - \left( \tfrac{1}{k} - \epsilon \right) \log_s ( 1 - k\epsilon )
	 - \left( \tfrac{k-1}{k} + \epsilon \right) \log_s \left( 1 + \tfrac{k\epsilon}{k-1} \right). 
\intertext{We can use the identity, $-\log_s(1-\gamma) \leq (\gamma + \frac{5}{9}\gamma^2)\log_es$, again to simplify this expression. }
	H(Y) 
	& \leq \log_sk - \tfrac{k\epsilon^2}{9} \left( 4 - 5k\epsilon + \tfrac{4}{k-1} - \tfrac{5k\epsilon}{(k-1)^2} \right) \log_es\\
	& \leq \log_sk - \frac{k\epsilon^2}{3}\log_es.
	\end{align*}
\end{itemize}
In either case, $H(Y) < \log_s k - \delta$ for any $\delta < \frac{k\epsilon^2}{3}\log_es$. 
\end{proof}

\begin{lem}
\label{lem:alpha} 
For any integer $s \geq 2$, there exists positive constant $\epsilon = \epsilon(s)$ that satisfies the following property. 
For any non semi-perfect function $f : \Z_s^2 \rightarrow \Z_s$ and for any three $(s,\epsilon)$-uniform random variables $Y_1,Y_2,Y_3$ over $\Z_s$ satisfying $Y_2 = f(Y_1,Y_3)$, if $(Y_1,Y_3)$ is $(s^2,\epsilon)$-uniform, then 
  $$ I(Y_1;Y_3|Y_2) \geq \tfrac{1}{2} \min \big\{ I(U_1;U_2|f(U)) \; \big| \; f \mbox{ is a flat but not semi-perfect} \big\} = \delta_1,$$
where $U = (U_1,U_2) \in_u \Z_s^2$.
\end{lem}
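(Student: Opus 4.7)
The plan is a compactness/continuity argument. There are only finitely many $f:\Z_s^2\to\Z_s$, so the minimum in the statement is attained; by Proposition~\ref{prop:known_entropy_results}(3) combined with the definition of ``semi-perfect'', each $I(U_1;U_2\mid f(U))$ with $f$ flat but not semi-perfect is strictly positive, hence so is $2\delta_1$. I would pick $\epsilon>0$ small in two stages: first small enough to force $f$ itself to be flat, then small enough that a continuity bound pushes $I(Y_1;Y_3\mid Y_2)$ within $\delta_1$ of $I(U_1;U_2\mid f(U))$.

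For the flatness stage, for every $z\in\Z_s$,
$$\P(Y_2=z)=\sum_{(x,y)\in f^{-1}(z)}\P(Y_1=x,Y_3=y)=\frac{|f^{-1}(z)|}{s^2}+\eta_z$$
with $|\eta_z|\leq s^2\epsilon$; combined with $|\P(Y_2=z)-1/s|\leq\epsilon$ this yields $\bigl||f^{-1}(z)|-s\bigr|\leq s^2(s^2+1)\epsilon$. Choosing $\epsilon<(s^2(s^2+1))^{-1}$ forces $|f^{-1}(z)|=s$ for every $z$, so $f$ must be flat. This step is important because the statement's minimum is defined only over flat non-semi-perfect $f$, and we have to funnel the hypothesis into that class.

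For the continuity stage, fix a flat but not semi-perfect $f$. Since $Y_2=f(Y_1,Y_3)$ is a deterministic function of $(Y_1,Y_3)$, formula~\eqref{eq:conditional_mutual_information} reduces to
$$I(Y_1;Y_3\mid Y_2)=H(Y_1,Y_2)+H(Y_2,Y_3)-H(Y_1,Y_3)-H(Y_2),$$
and each of these entropies is a continuous function of the distribution $\mu$ of $(Y_1,Y_3)$ on the (compact) probability simplex of $\Z_s^2$. At $\mu=\mu_0$ uniform, the right-hand side equals $I(U_1;U_2\mid f(U))\geq 2\delta_1$, so there is $\epsilon_f>0$ such that whenever every coordinate of $\mu$ differs from $1/s^2$ by at most $\epsilon_f$, the right-hand side is at least $\delta_1$. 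Taking $\epsilon$ to be the minimum of $(s^2(s^2+1))^{-1}$ and of $\epsilon_f$ over the (finitely many) flat non-semi-perfect $f$ gives the desired uniform bound. The main obstacle is producing a \emph{single} $\epsilon$ that works across all $f$ at once; finiteness of the function space $\Z_s^{\Z_s^2}$ defuses this, so no sophisticated compactness argument is needed beyond continuity of entropy on a finite alphabet.
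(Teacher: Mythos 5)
Your proof is correct and follows essentially the same two-stage strategy as the paper: first a counting/uniformity argument to force $f$ to be flat (the paper gets there by contradicting the $(s,\epsilon)$-uniformity of $Y_2$ when $|f^{-1}(z)|\geq s+1$ or $\leq s-1$, yielding a slightly less restrictive threshold $\epsilon<\frac{1}{s^2(s+2)}$, but your direct estimate $\bigl||f^{-1}(z)|-s\bigr|\leq s^2(s^2+1)\epsilon$ is equivalent in effect), and then continuity of conditional mutual information plus finiteness of the function space $\Z_s^{\Z_s^2}$ to push $I(Y_1;Y_3\mid Y_2)$ within $\delta_1$ of $I(U_1;U_2\mid f(U))\geq 2\delta_1$. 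Your write-up is actually a bit more explicit than the paper's in spelling out why continuity applies (reducing the conditional mutual information to a signed sum of four entropies) and in noting that one needs a single $\epsilon$ uniform over the finitely many flat non-semi-perfect $f$.
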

\begin{proof}
The value $\delta_1=\delta_1(s) = \tfrac{1}{2} \min \big\{ I(U_1;U_2|f(U)) \; \big| \; f \mbox{ is a flat but not semi-perfect} \big\}$ is well-defined for any $s \geq 2$, because there are only a finite number of functions $f : \Z_s^2 \rightarrow \Z_s$, and at least one of them is flat and not semi-perfect hence we can take the minimum of these. For example, the function $f(x,y) = x+y$ (mod $s$) is flat but not semi-perfect. 
First, let $\epsilon < \frac{1}{s^2(s+2)}$, so that 
	$$ \frac{1}{s^2} - (s-1)\epsilon > \frac{1}{s^2} - (s+1)\epsilon > \epsilon. $$ 
We show that $f$ is flat by contradiction. Since $(Y_1,Y_3)$ is $(s^2,\epsilon)$-uniform: 
\begin{itemize}
\item If $| f^{-1}(z) | \geq s+1$ then 
	$$ \P(Y_2=z) = \P( (Y_1,Y_3) \in f^{-1}(z)) \geq (s+1) \left( \tfrac{1}{s^2} - \epsilon \right) 
	= \tfrac{1}{s} + \left( \tfrac{1}{s^2} - (s+1)\epsilon \right) > \tfrac{1}{s} + \epsilon. $$
\item If $| f^{-1}(z) | \leq s-1$ then 
	$$ \P(Y_2=z) = \P( (Y_1,Y_3) \in f^{-1}(z)) \leq (s-1) \left( \tfrac{1}{s^2} + \epsilon \right) 
	= \tfrac{1}{s} - \left( \tfrac{1}{s^2} - (s-1)\epsilon \right) < \tfrac{1}{s} - \epsilon. $$
\end{itemize}
Both cases contradict the assumption that $Y_2$ is $(s,\epsilon)$-uniform. Therefore $f$ is a flat function and so $$I(U_1;U_2|f(U)) \geq 2\delta_1.$$

Moreover, since $(Y_1,Y_3)$ is $(s^2,\epsilon)$-uniform, then $U$ and $(Y_1,Y_3)$ differ in distribution by less than $\epsilon$. Since mutual information is continuous, we can choose $\epsilon$ small enough so that 
	$$ \big| I(U_1;U_2|f(U)) - I(Y_1;Y_3|Y_2) \big| \leq \delta_1. $$
Then, by the triangle inequality, 
$I(Y_1;Y_3|Y_2) \geq \delta_1$.

\end{proof}

\begin{defn}
\label{defn:epsilon,delta} 
From now on, for any integer $s \geq 2$, let $\epsilon = \epsilon(s) >0$ be chosen small enough so that $\epsilon \leq \frac{1}{s^2(2s+1)}$ and $\epsilon$ satisfies Lemma~\ref{lem:alpha}. Then let $\delta_2 = \delta_2(s) >0$ be chosen small enough to satisfy Proposition~\ref{prop:delta} for both $k=s$ and $k=s^2$ for this value $\epsilon$. Then, with $\delta_1$ as defined in Lemma \ref{lem:alpha}, let $\delta= \min (\delta_1,\delta_2)$.
\end{defn}

\begin{lem}
\label{lem:shannon}
Let $n \geq 5$ be an integer and let $\Pr$ be any non-trivial protocol on $C_n$ with $s \geq 2$ colours. The random variables $X_1,X_2,X_3,X_4,X_5$ (Definition~\ref{defn:C_n}) satisfy:
	$$ H_1^5 \leq 3+h(2,4) - I(X_2;X_4|X_3). $$
\end{lem}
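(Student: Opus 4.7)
The plan is to express $H_1^5$ as a sum of conditional entropies given $X_3$ and then apply the data processing inequality twice to bound the resulting mutual-information terms. Since $X_2 = f_2(X_1,X_3)$ and $X_4 = f_4(X_3,X_5)$ are determined by their neighbours, identity \eqref{eq:awesome} gives
\begin{equation*}
H_1^5 = h(1,2,3,4) + h(2,3,4,5) = h(1,3,4) + h(2,3,5),
\end{equation*}
and the chain rule with respect to $X_3$ then yields $H_1^5 = 2h(3) + H(X_1,X_4 \,|\, X_3) + H(X_2,X_5\,|\,X_3)$. Each conditional joint entropy expands through conditional mutual information, for example $H(X_1,X_4\,|\,X_3) = H(X_1|X_3) + H(X_4|X_3) - I(X_1;X_4|X_3)$, and symmetrically for the other term.

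Next, the data processing inequality, applied conditionally on $X_3$, would be invoked in two places. Because $X_2$ is a function of $(X_1,X_3)$, we get $I(X_1;X_4|X_3) \geq I(X_2;X_4|X_3)$, and because $X_4$ is a function of $(X_3,X_5)$, we get $I(X_2;X_5|X_3) \geq I(X_2;X_4|X_3)$. Substituting into the preceding display gives
\begin{equation*}
H_1^5 \leq 2h(3) + H(X_1|X_3) + H(X_2|X_3) + H(X_4|X_3) + H(X_5|X_3) - 2I(X_2;X_4|X_3).
\end{equation*}
To close, note that $X_3 = f_3(X_2,X_4)$ gives $h(2,3,4) = h(2,4)$ by \eqref{eq:awesome}, so rearranging the definition of conditional mutual information yields the key identity $H(X_2|X_3) + H(X_4|X_3) = h(2,4) - h(3) + I(X_2;X_4|X_3)$. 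Substituting this and using the bounds $H(X_1|X_3), H(X_5|X_3), h(3) \leq 1$ from Lemma \ref{lem:H(X),h(i)} produces exactly the target inequality $H_1^5 \leq 3 + h(2,4) - I(X_2;X_4|X_3)$.

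The main subtlety is obtaining the correct sign on the mutual information term. A more direct attack, such as using subadditivity $h(1,3,4) \leq h(1)+h(3,4)$ followed by the same rearrangement, would produce only $H_1^5 \leq 3 + h(2,4) + I(X_2;X_4|X_3)$, which goes in the wrong direction. The sign flips precisely because the two applications of the data processing inequality together contribute $-2I(X_2;X_4|X_3)$, while the identity for $H(X_2|X_3)+H(X_4|X_3)$ restores only $+I(X_2;X_4|X_3)$, leaving a net $-I(X_2;X_4|X_3)$.
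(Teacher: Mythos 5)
Your proof is correct. The paper establishes the same bound by simply summing six (in)equalities: one rewriting of $h(2,3,4)+h(3)$ via conditional mutual information, two applications of the Shannon inequality (submodularity) --- namely $I(X_1;X_4\mid X_2,X_3)\ge 0$ and $I(X_2;X_5\mid X_3,X_4)\ge 0$ --- and three uses of the functional constraints $X_i=f_i(X_{i-1},X_{i+1})$. Your version reorganises the same content: you condition on $X_3$ via the chain rule and replace the two submodularity steps by two applications of the (conditional) data processing inequality, exploiting the Markov structure $X_4\to X_1\to X_2$ and $X_2\to X_5\to X_4$ given $X_3$. The conditional DPI $I(X_1;X_4\mid X_3)\ge I(X_2;X_4\mid X_3)$ is literally equivalent to $I(X_1;X_4\mid X_2,X_3)\ge 0$ once one notes $I(X_2;X_4\mid X_1,X_3)=0$, so the underlying entropy inequalities invoked are identical. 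Your framing buys a cleaner explanation of where the crucial negative sign on $I(X_2;X_4\mid X_3)$ comes from, at the cost of invoking DPI, which the paper avoids by working only with the facts in Proposition~\ref{prop:known_entropy_results}. One small point: the bounds $H(X_1\mid X_3)\le 1$ and $H(X_5\mid X_3)\le 1$ are not directly from Lemma~\ref{lem:H(X),h(i)}; you additionally need that conditioning does not increase entropy (equivalently $I(X_1;X_3)\ge 0$, Proposition~\ref{prop:known_entropy_results}(2)) to pass from $H(X_1\mid X_3)$ to $h(1)\le 1$.
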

\begin{proof}
By Lemma~\ref{lem:H(X),h(i)}, it suffices to show $H_1^5 \leq h(1)+h(3)+h(5)+h(2,4) - I(X_2;X_4|X_3)$. By Shannon's Inequality (Proposition~\ref{prop:known_entropy_results}) we have:
	\begin{align}
	h(2,3,4) + h(3) & = h(2,3) + h(3,4) - I(X_2;X_4|X_3), \label{eq:mutual_information} \\
	h(1,2,3,4) + h(2,3) & \leq h(1,2,3) + h(2,3,4), \label{eq:shannon12} \\
	\mbox{and} \quad h(2,3,4,5) + h(3,4) & \leq h(2,3,4) + h(3,4,5). \label{eq:shannon23} \\
\intertext{Also, since $X_i = f_i(X_{i-1},X_{i+1})$ for $i=2,3,4$ respectively we have:}
	h(1,2,3) &= h(1,3)  \leq h(1) + h(3), \label{eq:func1} \\
	h(2,3,4) & = h(2,4), \label{eq:func2} \\
	\mbox{and} \quad h(3,4,5) &= h(3,5)  \leq h(3) + h(5). \label{eq:func3} 	
	\end{align}
The required result is the sum of equations (\ref{eq:mutual_information}), (\ref{eq:shannon12}), (\ref{eq:shannon23}), (\ref{eq:func1}), (\ref{eq:func2}) and (\ref{eq:func3}).
\end{proof}

\begin{lem}
\label{lem:one_non-semi-perfect_function}
Let $n \geq 5$ be an integer and let $\Pr = (f_1,f_2, \ldots ,f_n)$ be a non-trivial protocal on $C_n$ with $s \geq 2$ colours and let $X\in_u \fix(\Pr)$. For any $j$, if $f_{j+2}$ is not semi-perfect or $(X_{j+1},X_{j+3})$ is not $(s^2,\epsilon)$-uniform then $H_j^{j+4} \leq 5 - \delta$, for $\delta$ as in Definition \ref{defn:epsilon,delta}. 
\end{lem}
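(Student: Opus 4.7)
The plan is to apply Lemma~\ref{lem:shannon} shifted by $j-1$. Although its statement bounds $h(1)+h(3)+h(5)$ by $3$, the six inequalities summed in its proof in fact establish the sharper intermediate bound
\begin{equation*}
H_j^{j+4} \leq h(j) + h(j+2) + h(j+4) + h(j+1, j+3) - I(X_{j+1}; X_{j+3} \mid X_{j+2}).
\end{equation*}
Using $h(j), h(j+4) \leq 1$ from Lemma~\ref{lem:H(X),h(i)} at the two outer positions, the task reduces to verifying
\begin{equation*}
h(j+2) + h(j+1, j+3) - I(X_{j+1}; X_{j+3} \mid X_{j+2}) \leq 3 - \delta
\end{equation*}
under either of the two hypotheses in the lemma.

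I would then split into three cases. \textbf{(i)} If $(X_{j+1}, X_{j+3})$ is not $(s^2,\epsilon)$-uniform, the contrapositive of Proposition~\ref{prop:delta} (with $k=s^2$) gives $h(j+1,j+3) < 2-\delta_2 \leq 2-\delta$, so the claim follows from $h(j+2) \leq 1$ and nonnegativity of the conditional mutual information. Otherwise $(X_{j+1}, X_{j+3})$ is $(s^2,\epsilon)$-uniform, and I invoke the other hypothesis, that $f_{j+2}$ is not semi-perfect. \textbf{(ii)} If $f_{j+2}$ is in fact not flat, the counting step at the start of the proof of Lemma~\ref{lem:alpha} applied to $X_{j+2} = f_{j+2}(X_{j+1}, X_{j+3})$ forces $X_{j+2}$ to fail being $(s,\epsilon)$-uniform, whence Proposition~\ref{prop:delta} with $k=s$ gives $h(j+2) < 1-\delta_2 \leq 1-\delta$, and the required bound follows together with $h(j+1,j+3) \leq 2$. \textbf{(iii)} If $f_{j+2}$ is flat but not semi-perfect, Lemma~\ref{lem:alpha} applies to the triple $(X_{j+1}, X_{j+2}, X_{j+3})$ and delivers $I(X_{j+1}; X_{j+3} \mid X_{j+2}) \geq \delta_1 \geq \delta$, closing the final case.

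The one piece of book-keeping I expect to be mildly annoying is checking the marginal uniformity hypothesis of Lemma~\ref{lem:alpha} in case~(iii): one needs each of $X_{j+1}, X_{j+2}, X_{j+3}$ to be $(s,\epsilon)$-uniform. This holds because $(s^2,\epsilon)$-uniformity of $(X_{j+1}, X_{j+3})$ together with flatness of $f_{j+2}$ keeps each marginal within $s\epsilon$ of $1/s$, and the smallness condition $\epsilon \leq 1/(s^2(2s+1))$ built into Definition~\ref{defn:epsilon,delta} is precisely what lets the continuity argument inside the proof of Lemma~\ref{lem:alpha} absorb this extra factor of $s$. Beyond that, the proof is a direct assembly of Lemma~\ref{lem:shannon}, Lemma~\ref{lem:H(X),h(i)}, Proposition~\ref{prop:delta}, and Lemma~\ref{lem:alpha}.
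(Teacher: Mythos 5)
Your proof uses exactly the same toolkit as the paper (Lemma~\ref{lem:shannon}, Proposition~\ref{prop:delta}, Lemma~\ref{lem:alpha}, Lemma~\ref{lem:H(X),h(i)}) and the intermediate inequality you extract from the proof of Lemma~\ref{lem:shannon} is precisely what the paper uses. The difference is in the case split: the paper's first case is ``some $X_i$, $i \in \{j,\ldots,j+4\}$, is not $(s,\epsilon)$-uniform,'' handled via Lemma~\ref{lem:H_j^k<sum(h(i))} and Proposition~\ref{prop:delta}. That case is what guarantees that, in the remaining case, the hypotheses of Lemma~\ref{lem:alpha} hold verbatim. You instead split the non-joint-uniform alternative by whether $f_{j+2}$ is flat, which is a clean and correct way to handle the $X_{j+2}$ marginal.

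The one genuine weak spot is the closing book-keeping in your case (iii). You only get that $X_{j+1}$, $X_{j+2}$, $X_{j+3}$ are $(s, s\epsilon)$-uniform, not $(s,\epsilon)$-uniform, so the stated hypotheses of Lemma~\ref{lem:alpha} are not met, and the claim that the lemma's continuity argument ``absorbs the extra factor of $s$'' is not right: that continuity step compares $(Y_1,Y_3)$ with $U \in_u \Z_s^2$ and uses only the $(s^2,\epsilon)$-uniformity of the pair; the marginal uniformities never enter it. The actual reason your case (iii) goes through is that Lemma~\ref{lem:alpha}'s proof uses the $(s,\epsilon)$-uniformity of $Y_2$ only to \emph{derive} flatness, which you are assuming outright. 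So either you must open up Lemma~\ref{lem:alpha}'s proof and observe that with flatness given, only the joint $(s^2,\epsilon)$-uniformity is needed, or (cleaner, and what the paper does) you should add a preliminary case for ``some $X_i$ not $(s,\epsilon)$-uniform,'' after which the remaining case satisfies Lemma~\ref{lem:alpha}'s hypotheses literally and your subdivision into (ii)/(iii) becomes unnecessary.
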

\begin{proof}
Without loss of generality let $j=1$. There are $3$ cases.
\begin{itemize}
\item If, for any $i \in \{ 1,2,3,4,5 \}$, the variable $X_i$ is not $(s,\epsilon)$-uniform, then $h(i) \leq 1 - \delta_2$ (Proposition~\ref{prop:delta}). In this case, by Lemma~\ref{lem:H_j^k<sum(h(i))},
	$$ H_1^5 \leq \sum_{i=1}^5 h(i) \leq 5 - \delta_2. $$
\item If $(X_2,X_4)$ is not $(s^2,\epsilon)$-uniform, then $h(2,4) \leq 2-\delta_2$ (Proposition~\ref{prop:delta}). Therefore, by Lemma~\ref{lem:shannon}, we have 
	$$ H_1^5 \leq 3 + h(2,4) - I(X_2;X_4|X_3) \leq 5-\delta_2. $$
\item Otherwise, $X_2,X_3,X_4$ are each $(s,\epsilon)$-uniform and $(X_2,X_4)$ is $(s^2,\epsilon)$-uniform and $f_3$ is not semi-perfect. In this case, by Lemma~\ref{lem:alpha}, we have $I(X_{j+1};X_{j+3}|X_{j+2}) \geq \delta_1$. By Lemma~\ref{lem:shannon}, we have 
	$$ H_1^5 \leq 3 + h(2,4) - I(X_2;X_4|X_3) \leq 5-\delta_1. $$
\end{itemize}
In all cases, we have $H_1^5 \leq 5 - \delta$ because $\delta = \min\{ \delta_1 , \delta_2 \}$.
\end{proof}

\begin{lem}
\label{lem:at_least_one_non-semi-perfect_function}
Let $n \geq 7$ be an integer and let $\Pr = (f_1,f_2, \ldots ,f_n)$ a non-trivial protocol on $C_n$ with $s \geq 2$ colours and let $X\in_u \fix(\Pr)$. For any $j$, if any of $f_{j+2}$, $f_{j+3}$ or $f_{j+4}$ are not semi-perfect, or any of $(X_{j+1},X_{j+3})$, $(X_{j+2},X_{j+4})$ or $(X_{j+3},X_{j+5})$ are not $(s^2,\epsilon)$-uniform, then $H_{j}^{j+6} \leq 7 - \delta$.
\end{lem}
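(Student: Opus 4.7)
The plan is to extend Lemma~\ref{lem:one_non-semi-perfect_function} from length-$5$ to length-$7$ windows by identifying which of three overlapping length-$5$ subwindows of $[j,j+6]$ contains the offending function or deficient-entropy condition, applying Lemma~\ref{lem:one_non-semi-perfect_function} on that subwindow to save $\delta$, and paying at most $2$ for the two ``boundary'' vertices not covered (using $h(\cdot) \leq 1$ from Lemma~\ref{lem:H(X),h(i)}).

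For the easy cases, suppose first that the offending condition involves $f_{j+2}$ or $(X_{j+1},X_{j+3})$. Then Lemma~\ref{lem:one_non-semi-perfect_function} applied at index $j$ gives $H_j^{j+4} \leq 5-\delta$, and Proposition~\ref{prop:H=sum(H)} with split point $j+4$ yields $H_j^{j+6} \leq H_j^{j+4} + H_{j+5}^{j+6} \leq (5-\delta) + 2$. Symmetrically, if the condition involves $f_{j+4}$ or $(X_{j+3},X_{j+5})$, apply Lemma~\ref{lem:one_non-semi-perfect_function} at index $j+2$ to get $H_{j+2}^{j+6} \leq 5-\delta$ and split with Proposition~\ref{prop:H=sum(H)} at $j+1$ to peel off $H_j^{j+1} \leq 2$.

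The main obstacle is the middle case, where $f_{j+3}$ is not semi-perfect or $(X_{j+2},X_{j+4})$ is not $(s^2,\epsilon)$-uniform. Lemma~\ref{lem:one_non-semi-perfect_function} at index $j+1$ yields $H_{j+1}^{j+5} \leq 5-\delta$, but Proposition~\ref{prop:H=sum(H)} only splits at a single point and cannot isolate a middle window from both ends of $[j,j+6]$ at once. The workaround is to prove
\[ H_j^{j+6} \;\leq\; h(j) + h(j+6) + H_{j+1}^{j+5} \]
directly, mimicking the proof of Proposition~\ref{prop:H=sum(H)}. First apply the identity~\eqref{eq:awesome} to drop the determined indices $j+1$ and $j+5$, yielding $h(j,j+1,\ldots,j+5) = h(j,j+2,j+3,j+4,j+5)$ and $h(j+1,\ldots,j+6) = h(j+1,j+2,j+3,j+4,j+6)$. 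Then subadditivity $H(A,B) \leq H(A) + H(B)$ peels off $h(j)$ and $h(j+6)$ at the two boundaries, and the surviving terms $h(j+2,j+3,j+4,j+5)$ and $h(j+1,j+2,j+3,j+4)$ are precisely the two summands defining $H_{j+1}^{j+5}$.

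In every case one obtains $H_j^{j+6} \leq 2 + (5-\delta) = 7-\delta$, as required. The only step demanding care is the ad hoc decomposition in the middle case, which circumvents the single-point rigidity of Proposition~\ref{prop:H=sum(H)} by applying~\eqref{eq:awesome} at both endpoints of the length-$7$ window before invoking subadditivity.
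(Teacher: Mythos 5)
Your proposal is correct and follows essentially the same three-case decomposition as the paper, saving $\delta$ from one length-5 subwindow and paying at most $2$ for the two leftover single-vertex entropies. The only cosmetic difference is that you invoke Proposition~\ref{prop:H=sum(H)} as a black box for the two outer cases, while the paper unrolls the same index-dropping manipulation by hand in all three cases; the ``ad hoc'' decomposition you flag for the middle case (dropping indices $j+1$ and $j+5$ via~\eqref{eq:awesome} and then peeling off $h(j)$ and $h(j+6)$ by subadditivity) is in fact exactly what the paper does there too.
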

\begin{proof}
Without loss of generality let $j=1$. We treat each case individually, and use Lemma~\ref{lem:one_non-semi-perfect_function}.
\begin{itemize}
\item If $f_3$ is not semi-perfect or $(X_2,X_4)$ is not $(s^2,\epsilon)$-uniform then 
	\begin{align*} H_1^7 
	& = h(1,2,3,4,5,6) + h(2,3,4,5,6,7) \\
	& = h(1,2,3,4,6) + h(2,3,4,5,7) \\
	& \leq H_1^5 + h(6) + h(7) \\
	& \leq (5 - \delta) + 1 + 1.
	\end{align*}
\item If $f_4$ is not semi-perfect or $(X_3,X_5)$ is not $(s^2,\epsilon)$-uniform then 
	\begin{align*} H_1^7
	& = h(1,2,3,4,5,6) + h(2,3,4,5,6,7) \\
	& = h(1,3,4,5,6) + h(2,3,4,5,7) \\
	& \leq h(1) + H_2^6 + h(7) \\
	& \leq 1 + (5 - \delta) + 1.
	\end{align*}
\item If $f_5$ is not semi-perfect or $(X_4,X_6)$ is not $(s^2,\epsilon)$-uniform then 
	\begin{align*} H_1^7
	& = h(1,2,3,4,5,6) + h(2,3,4,5,6,7) \\
	& = h(1,3,4,5,6) + h(2,4,5,6,7) \\
	& \leq h(1) + h(2) + H_3^7 \\
	& \leq 1 + 1 + (5 - \delta).
	\end{align*}
\end{itemize}
\end{proof}

\begin{lem}
\label{lem:many_non-semi-perfect_functions}
Let $n \geq 7(\delta^{-1}+2)$. Suppose $\Pr = (f_1,f_2, \ldots ,f_n)$ is a non-trivial protocol on $C_n$ with $s \geq 2$ colours and let $X\in_u \fix(\Pr)$ such that, for each $j$, either
	\begin{itemize}
	\item at least one of $f_{j-1}$, $f_j$, $f_{j+1}$ is not semi-perfect, or 
	\item at least one of $(X_{j-2},X_j)$, $(X_{j-1},X_{j+1})$, $(X_j,X_{j+2})$ is not $(s^2,\epsilon)$-uniform,
	\end{itemize}
	  then $\fix(\Pr) < s^{(n-1)/2}$. 
\end{lem}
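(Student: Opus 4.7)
The plan is to use the hypothesis to apply Lemma~\ref{lem:at_least_one_non-semi-perfect_function} at every position along the cycle, and then aggregate the resulting savings via the decomposition of Lemma~\ref{lem:H(X)=sum(H)}. First I would observe that the hypothesis of Lemma~\ref{lem:many_non-semi-perfect_functions}, stated at index $j+3$, matches term-for-term the hypothesis of Lemma~\ref{lem:at_least_one_non-semi-perfect_function} stated at index $j$: the trio $\{f_{j+2}, f_{j+3}, f_{j+4}\}$ and the three pairs $(X_{j+1},X_{j+3}), (X_{j+2},X_{j+4}), (X_{j+3},X_{j+5})$ are precisely the objects centred at $j+3$. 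Consequently, for every $j$ we obtain $H_j^{j+6} \leq 7 - \delta$.

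Next I would pick integers $m \geq 1$ and $r \geq 0$ with $n = 7m + r$ and $r \in \{0\} \cup [4,10]$; such a choice is available because $n \geq 7(\delta^{-1}+2)$. Concretely, take $m = \lfloor n/7 \rfloor$ when $n \bmod 7 \notin \{1,3\}$ and $m = \lfloor n/7 \rfloor - 1$ otherwise; a short case check shows $m > \delta^{-1}$ in either case. Apply Lemma~\ref{lem:H(X)=sum(H)} with $d(1) = 1$, $d(i) = 7(i-1)$ for $2 \leq i \leq m+1$, and (if $r > 0$) $d(m+2) = n$, so that
\[
2\log_s\fix(\Pr) \;=\; H_1^7 \;+\; \sum_{i=2}^{m} H_{7(i-1)+1}^{7i} \;+\; H_{7m+1}^{n},
\]
where the final term is dropped when $r = 0$. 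Each of the $m$ length-$7$ terms is at most $7 - \delta$ by the first step, and Lemma~\ref{lem:H_j^k<sum(h(i))} together with Lemma~\ref{lem:H(X),h(i)} bounds the residual term (when $r \geq 4$) by $r$. Summing yields $2\log_s\fix(\Pr) \leq m(7-\delta) + r = n - m\delta < n - 1$, so $\fix(\Pr) < s^{(n-1)/2}$.

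The main subtlety is the careful choice of $m$ and the residual block. We need $m\delta > 1$ strictly, in order to turn the bound into the strict inequality $\fix(\Pr) < s^{(n-1)/2}$, and simultaneously we need the residual length to avoid $\{1,2,3\}$ so that Lemma~\ref{lem:H_j^k<sum(h(i))} still delivers its tight ``$\leq$ length'' bound — residuals of length $1$, $2$, or $3$ would inflate the sum by a constant and spoil the strict inequality. The ``$+2$'' in $n \geq 7(\delta^{-1}+2)$ is exactly the slack that lets us absorb an extra seven indices into the residual when $n \bmod 7 \in \{1,3\}$ while still preserving $m > \delta^{-1}$.
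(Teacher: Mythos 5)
Your proof is essentially the paper's: both decompose $2\log_s\fix(\Pr)$ via Lemma~\ref{lem:H(X)=sum(H)} into blocks of seven consecutive indices, apply Lemma~\ref{lem:at_least_one_non-semi-perfect_function} to bound each such block by $7-\delta$, bound the leftover block by its length using Lemmas~\ref{lem:H_j^k<sum(h(i))} and~\ref{lem:H(X),h(i)}, and use $m>\delta^{-1}$ to conclude $2\log_s\fix(\Pr)\leq n-m\delta<n-1$. The only mechanical difference is how the awkward residual lengths are avoided: the paper takes $m$ odd (so that for odd $n$ the residual $n-7m$ is even, hence never $1$ or $3$), while you decrement $m$ by one when $n\bmod 7\in\{1,3\}$.

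One small slip in your bookkeeping: when $n\equiv 2\pmod 7$ your rule gives $r=2$, which is not in your claimed set $\{0\}\cup[4,10]$, and which is also not covered by your invocation of Lemma~\ref{lem:H_j^k<sum(h(i))} (that lemma needs $k\geq j+3$). Fortunately $r=2$ is harmless: by definition $H_{7m+1}^{7m+2}=h(7m+1)+h(7m+2)\leq 2=r$ directly from Lemma~\ref{lem:H(X),h(i)}, so the final inequality $2\log_s\fix(\Pr)\leq m(7-\delta)+r=n-m\delta$ still holds. After adding that one line, your argument is complete and matches the paper's.
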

\begin{proof}
Let $m$ be an odd integer such that $m > \delta^{-1}$ and $7m \leq n$. By Lemma~\ref{lem:H(X)=sum(H)} and Lemma~\ref{lem:at_least_one_non-semi-perfect_function}, we have
	\begin{align*} 2H(X) 
	& \leq \sum_{j=0}^{m-1} H_{7j+1}^{7j+7} + \sum_{i=7m}^{n-1} h(i) \\
	& \leq m(7 - \delta) + (n-7m) \\
	& = n - m \delta.
	\end{align*}
Since $m > \delta^{-1}$, this means that $H(X) < \frac{n-1}{2}$. Therefore $\fix(\Pr) < s^{(n-1)/2}$ by Lemma~\ref{lem:H(X),h(i)}.
\end{proof}

\section{Guessing numbers of large odd cycles}
\label{sec:n_large}

In this section, we prove our main result in Theorem~\ref{thm:n_large}, which states that, for any given $s$, this fractional-clique-partition protocol is optimal on any large enough odd cycle.
\begin{defn}[Perfect function]
\label{defn:perfect_function}
For any function $f: \Z_s^2 \to \Z$, let $L(f,z)$ and $R(f,z)$ denote the subsets 
	\begin{align*} 
	L(f,z) & = \{ x \: | \: f(x,y)=z \mbox{ for some } y \} \\ \mbox{and} \quad  
	R(f,z) & = \{ y \: | \: f(x,y)=z \mbox{ for some } x \}.
	\end{align*}	
The function $f$ is called a \emph{perfect} function if it is semi-perfect and the cardinalities $|L(f,z)|$ and $|R(f,z)|$ do not depend on $z$, 
\emph{i.e.} if $|L(f,z)|=|L(f,z^\prime)|$ and $|R(f,z)|=|R(z^\prime)|$ for all $z,z^\prime \in \Z_s$.
\end{defn}

\begin{prop}
\label{prop:LR=s}
If $f$ is a semi-perfect function then for all $z \in \Z_s$ then
	$$ f^{-1}(z) = L(f,z) \times R(f,z). $$
Moreover $|L(f,z)||R(f,z)| = s$. 
\end{prop}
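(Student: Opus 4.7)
The plan is to unpack the definition of semi-perfect by examining the conditional distribution of $U = (U_1,U_2)$ given the value of $f(U)$. Since $f$ is flat, $|f^{-1}(z)| = s$ for every $z \in \Z_s$, so $\P(f(U)=z) = s/s^2 = 1/s$. Because $U$ is uniform on $\Z_s^2$, the conditional distribution of $(U_1,U_2)$ given $f(U)=z$ is the uniform distribution on $f^{-1}(z)$.

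The key observation is that $L(f,z)$ is precisely the support of the marginal conditional distribution of $U_1$ given $f(U)=z$: an element $x$ lies in $L(f,z)$ iff there is some $y$ with $f(x,y)=z$, which is the same as saying $\P(U_1=x \mid f(U)=z) > 0$. Similarly, $R(f,z)$ is the support of $U_2$ given $f(U)=z$.

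Now I invoke the hypothesis $I(U_1;U_2 \mid f(U))=0$. By Proposition~\ref{prop:known_entropy_results}(3), this is equivalent to $U_1$ and $U_2$ being independent conditional on $f(U)$. In particular, for each $z$, the conditional joint distribution of $(U_1,U_2)$ factors as the product of its conditional marginals, and therefore its support is the Cartesian product of the two marginal supports. That gives $f^{-1}(z) = L(f,z) \times R(f,z)$: the inclusion $\subseteq$ is immediate from the definitions of $L$ and $R$, while $\supseteq$ uses the factorisation, since if $x \in L(f,z)$ and $y \in R(f,z)$ then $\P(U_1=x,U_2=y\mid f(U)=z) = \P(U_1=x\mid f(U)=z)\P(U_2=y\mid f(U)=z) > 0$, forcing $(x,y) \in f^{-1}(z)$. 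Taking cardinalities and combining with flatness $|f^{-1}(z)|=s$ yields $|L(f,z)|\,|R(f,z)|=s$.

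There is no real obstacle here beyond bookkeeping: the only thing one must be careful about is identifying $L(f,z)$ and $R(f,z)$ as the supports of the conditional marginals, and then correctly reading off that independence plus flatness forces the fibre to be a combinatorial rectangle of area $s$.
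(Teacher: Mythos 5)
Your proof is correct and follows essentially the same route as the paper: both establish $f^{-1}(z) = L(f,z)\times R(f,z)$ from the conditional-independence reformulation of $I(U_1;U_2\mid f(U))=0$, and then obtain the cardinality identity. The only cosmetic difference is that for $|L||R|=s$ you take cardinalities of the rectangle directly using flatness, whereas the paper recomputes $\P(f(U)=z)=1/s$ in two ways; these are the same fact in slightly different dress.
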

\begin{proof}
Let $U = (U_1,U_2) \in_u \Z_s^2$ and for a given $z$, let $L = L(f,z)$ and $R = R(f,z)$. Since $f$ is semi-perfect, we have $I(U_1,U_2 \; | \; f(U))=0$. Therefore, $U_1$ and $U_2$ are conditionally independent given $f(U)$. For any $x \in L$ and any $y \in R$, we must have
	\begin{align*}& \P(U_1=x \wedge U_2=y | f(U)=z) = \P(U_1=x | f(U)=z)\P(U_2=y | f(U)=z) > 0, \\
	&\mbox{and }f^{-1}(z) = L \times R.\end{align*} Furthermore, since $U_1$ and $U_2$ are independently uniformly distributed over $\Z_s$ and $U$ is uniformly distributed over $\Z_s^2$, we have
	$$ \frac{1}{s}  = \P( f(U)=z) = \P( U_1 \in L \wedge U_2 \in R) = \P( U_1 \in L) \times \P(U_2 \in R) = \frac{|L|}{s} \times \frac{|R|}{s}. $$
Therefore, $|L||R| = s$.
\end{proof}

\begin{lem}
\label{lem:two_semi-perfect_functions}
Let $s \geq 2$ be an integer, let $0 <\epsilon  \leq \frac{1}{s^2(2s+1)}$ be a constant. Let $\Pr = (f_1,f_2, \ldots ,f_n)$ be any non-trivial protocol on $C_n$ with $s$ colours and let $X \in_u \Fix(\Pr)$. If $f_1$ and $f_2$ are semi-perfect functions and $(X_0,X_2)$ and $(X_1,X_3)$ are $(s^2,\epsilon)$-uniform, then, for any $c_1,c_2 \in \Z_s$, we have 
	$$ | \{ c_0 | f_1(c_0,c_2)=c_1 \} | = | \{ c_3 | f_2(c_1,c_3)=c_2 \} |. $$
\end{lem}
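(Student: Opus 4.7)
My plan is to compute $\P(X_1 = c_1, X_2 = c_2)$ in two different ways, exploiting the two local constraints $X_1 = f_1(X_0, X_2)$ and $X_2 = f_2(X_1, X_3)$, and then use the $(s^2, \epsilon)$-uniformity hypotheses to force the two resulting counts to coincide.

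First, conditioning on $(X_0, X_2)$ and using $X_1 = f_1(X_0, X_2)$, I would write
$$\P(X_1 = c_1, X_2 = c_2) = \sum_{c_0 : f_1(c_0, c_2) = c_1} \P(X_0 = c_0, X_2 = c_2).$$
Setting $\ell = |\{c_0 : f_1(c_0, c_2) = c_1\}|$ and using that $(X_0, X_2)$ is $(s^2,\epsilon)$-uniform, each summand lies in $[1/s^2 - \epsilon, 1/s^2 + \epsilon]$, so the whole sum lies in $[\ell(1/s^2 - \epsilon),\, \ell(1/s^2 + \epsilon)]$. By exactly symmetric reasoning, conditioning on $(X_1, X_3)$ and using $X_2 = f_2(X_1, X_3)$ together with the $(s^2,\epsilon)$-uniformity of $(X_1, X_3)$ gives
$$\P(X_1 = c_1, X_2 = c_2) = \sum_{c_3 : f_2(c_1, c_3) = c_2} \P(X_1 = c_1, X_3 = c_3) \in \bigl[r(\tfrac{1}{s^2} - \epsilon),\, r(\tfrac{1}{s^2} + \epsilon)\bigr],$$
where $r = |\{c_3 : f_2(c_1, c_3) = c_2\}|$.

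Combining the two bounds yields $(\ell - r)/s^2 \leq (\ell + r)\epsilon$, and since $\ell, r \leq s$ (each set lives in $\Z_s$) this gives $|\ell - r| \leq 2s^3 \epsilon \leq 2s/(2s+1) < 1$ by the assumption $\epsilon \leq 1/(s^2(2s+1))$. As $\ell$ and $r$ are non-negative integers, I conclude $\ell = r$, which is the claimed equality.

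I do not expect a serious obstacle here. The one thing to spot is the two-way accounting of $\P(X_1 = c_1, X_2 = c_2)$; once that is in place, the bound on $\epsilon$ is tuned precisely to squeeze the integer $|\ell - r|$ below $1$. Interestingly, the semi-perfect hypotheses on $f_1, f_2$ are not directly used in this counting argument — only the coarse bound $\ell, r \leq s$ matters. I suspect they appear in the hypothesis because this lemma is meant to feed into Lemma~\ref{lem:three_semi-perfect_functions} and onwards, where the structural consequence $f^{-1}(z) = L(f,z) \times R(f,z)$ of Proposition~\ref{prop:LR=s} will be needed.
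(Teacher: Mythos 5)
Your proposal is correct and follows essentially the same approach as the paper's proof: both compute $\P(X_1 = c_1, X_2 = c_2)$ in two ways using $X_1 = f_1(X_0,X_2)$ and $X_2 = f_2(X_1,X_3)$, and both invoke the $(s^2,\epsilon)$-uniformity to sandwich the counts. The only cosmetic difference is in the finishing step: the paper argues by contradiction with a ratio bound ($1+\tfrac{1}{s} < |S_3|/|S_0| \leq 1 + \tfrac{1}{s}$), while you bound the integer $|\ell - r|$ below $1$ directly, which is marginally cleaner and avoids the implicit division by $|S_0|$; your closing remark that the semi-perfect hypotheses are unused here and are carried only for the downstream lemmas is also accurate.
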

\begin{proof} We proceed by contradiction. 
Let $S_0 = \{ c_0 | f_1(c_0,c_2)=c_1 \}$ and $S_3 = \{ c_3 | f_2(c_1,c_3)=c_2 \}$. Without loss of generality assume $|S_0|<|S_3|$ so since $|S_0| < s$ we must have $|S_3| > \left( 1 + \frac{1}{s} \right)|S_0|$. Now since $(X_0,X_2)$ is $(s^2,\epsilon)$-uniform,
	$$ \P( X_1=c_1 \wedge X_2=c_2) = \sum_{x \in S_0} \P\big( (X_0,X_2) = (x,c_2) \big) 
	\leq |S_0| \left( \frac{1}{s^2} + \epsilon \right). $$
Similarly, since $(X_1,X_3)$ is $(s^2,\epsilon)$-uniform, 
	$$ \P( X_1=c_1 \wedge X_2=c_2) = \sum_{x \in S_3}\P\big( (X_1,X_3) = (c_1,x) \big) 
	\geq |S_3| \left( \frac{1}{s^2} - \epsilon \right). $$
However, since $\epsilon \leq \frac{1}{s^2(2s+1)}$, this implies 
	$$ 1 + \frac{1}{s} < \frac{|S_3|}{|S_0|} \leq \frac{s^{-2}+\epsilon}{s^{-2}-\epsilon} 
	\leq \frac{\frac{1}{s^2}+\frac{1}{s^2(2s+1)}}{\frac{1}{s^2}-\frac{1}{s^2(2s+1)}} = 1 + \frac{1}{s}, $$
which is a contradiction.
\end{proof}

\begin{lem}
\label{lem:three_semi-perfect_functions}
Let $\Pr = (f_1,f_2, \ldots ,f_n)$ be a non-trivial protocol on $C_n$ with $s \geq 2$ colours, let $X \in_u \Fix(\Pr)$ and let $j$ be any index (indices taken modulo $n$). If $f_{j-1}$, $f_j$ and $f_{j+1}$ are semi-perfect functions and $(X_{j-2},X_j)$, $(X_{j-1},X_{j+1})$ and $(X_j,X_{j+2})$ are $(s^2,\epsilon)$-uniform, then $f_j$ is a perfect function.
\end{lem}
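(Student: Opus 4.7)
The plan is to apply Lemma~\ref{lem:two_semi-perfect_functions} to both consecutive pairs $(f_{j-1}, f_j)$ and $(f_j, f_{j+1})$, to rewrite each side of the resulting identities using the Cartesian factorization of Proposition~\ref{prop:LR=s}, and then to partition $\Z_s$ on both the left and right sides of $f_j$ according to the value of $|L(f_j, c_j)|$. For brevity write $\ell_i(c) = |L(f_i, c)|$ and $r_i(c) = |R(f_i, c)|$. By Proposition~\ref{prop:LR=s}, $|\{c_{j-2} : f_{j-1}(c_{j-2}, c_j) = c_{j-1}\}|$ equals $\ell_{j-1}(c_{j-1})$ if $c_j \in R(f_{j-1}, c_{j-1})$ and $0$ otherwise, while $|\{c_{j+1} : f_j(c_{j-1}, c_{j+1}) = c_j\}|$ equals $r_j(c_j)$ if $c_{j-1} \in L(f_j, c_j)$ and $0$ otherwise. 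Matching these positive-or-zero quantities under Lemma~\ref{lem:two_semi-perfect_functions} forces the indicator conditions to coincide, yielding two key biconditionals: (A) $c_{j-1} \in L(f_j, c_j) \Leftrightarrow c_j \in R(f_{j-1}, c_{j-1})$, and when this holds, $\ell_{j-1}(c_{j-1}) = r_j(c_j)$; and (B) $c_j \in L(f_{j+1}, c_{j+1}) \Leftrightarrow c_{j+1} \in R(f_j, c_j)$, and when this holds, $r_{j+1}(c_{j+1}) = \ell_j(c_j)$.

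From (A), if $\ell_j(c_j) \neq \ell_j(c_j')$ then $L(f_j, c_j) \cap L(f_j, c_j') = \emptyset$, since a common element $c_{j-1}$ would give $r_j(c_j) = \ell_{j-1}(c_{j-1}) = r_j(c_j')$, and then $\ell_j = s/r_j$ from Proposition~\ref{prop:LR=s} would force $\ell_j(c_j) = \ell_j(c_j')$. Symmetrically, (B) gives $R(f_j, c_j) \cap R(f_j, c_j') = \emptyset$ in the same situation. So, indexing over the possible values $\ell$ of $\ell_j$, the sets
\[
S_\ell = \{c_j \in \Z_s : \ell_j(c_j) = \ell\}, \quad U_\ell = \bigcup_{c_j \in S_\ell} L(f_j, c_j), \quad V_\ell = \bigcup_{c_j \in S_\ell} R(f_j, c_j)
\]
are such that $\{U_\ell\}_\ell$ and $\{V_\ell\}_\ell$ are each partitions of $\Z_s$ (they cover $\Z_s$ because every element of $\Z_s$ lies in some $L$-set and some $R$-set of $f_j$).

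A double-count then shows $|U_\ell| = |V_\ell| = |S_\ell|$. Counting $\{(c_{j-1}, c_j) : c_j \in S_\ell,\ c_{j-1} \in L(f_j, c_j)\}$ by $c_j$-columns gives $|S_\ell| \cdot \ell$. For the row count, fix any $c_{j-1} \in U_\ell$: the disjointness just proved implies that every $c_j$ with $c_{j-1} \in L(f_j, c_j)$ must lie in $S_\ell$, and the number of such $c_j$ equals $|R(f_{j-1}, c_{j-1})| = r_{j-1}(c_{j-1})$ by the biconditional half of (A); moreover, $\ell_{j-1}(c_{j-1}) = r_j(c_j^\ast) = s/\ell$ for any witness $c_j^\ast \in S_\ell$, so $r_{j-1}(c_{j-1}) = \ell$ by Proposition~\ref{prop:LR=s}. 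The row total is thus $|U_\ell| \cdot \ell$, whence $|U_\ell| = |S_\ell|$; the equality $|V_\ell| = |S_\ell|$ follows by the analogous argument using (B).

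The endgame is a Cartesian-product squeeze. By Proposition~\ref{prop:LR=s}, for any $c_j \in S_\ell$, $f_j^{-1}(c_j) = L(f_j, c_j) \times R(f_j, c_j) \subseteq U_\ell \times V_\ell$; these $|S_\ell|$ preimages are pairwise disjoint with total size $s|S_\ell|$, so
\[
s|S_\ell| \leq |U_\ell||V_\ell| = |S_\ell|^2.
\]
Hence any non-empty $S_\ell$ has $|S_\ell| \geq s$, and combined with $|S_\ell| \leq s$ this gives $|S_\ell| = s$, so exactly one class $S_\ell$ is non-empty. Thus $\ell_j$ is constant on $\Z_s$, and $r_j = s/\ell_j$ is constant as well, which is precisely the definition of $f_j$ being perfect. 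The main conceptual obstacle I anticipate is recognizing that Lemma~\ref{lem:two_semi-perfect_functions} must be deployed on \emph{both} sides of $f_j$ so that the resulting disjointness partitions both factors of each preimage simultaneously — it is this two-sided symmetry that supplies the inequality $s|S_\ell| \leq |S_\ell|^2$ needed to collapse all of $\Z_s$ into a single class.
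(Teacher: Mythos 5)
Your proof is correct, but it takes a genuinely different route from the paper's. The paper argues pointwise: it fixes arbitrary $c_j,c_j'\in\Z_s$, picks witnesses $(c_{j-1},c_{j+1})\in f_j^{-1}(c_j)$ and $(c_{j-1}',c_{j+1}')\in f_j^{-1}(c_j')$, and introduces the bridging value $c_j''=f_j(c_{j-1}',c_{j+1})$. Two applications of Lemma~\ref{lem:two_semi-perfect_functions} (one on each side of $f_j$) give $|L(f_j,c_j)|=|R(f_{j+1},c_{j+1})|=|L(f_j,c_j'')|$ and $|R(f_j,c_j'')|=|L(f_{j-1},c_{j-1}')|=|R(f_j,c_j')|$; then Proposition~\ref{prop:LR=s} converts the second into an $L$-equality and the chain $|L(f_j,c_j)|=|L(f_j,c_j'')|=|L(f_j,c_j')|$ closes the argument in a few lines. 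You instead derive the full biconditional structure from Lemma~\ref{lem:two_semi-perfect_functions} on both sides, establish that level sets of $\ell_j$ induce a partition of both left and right index sets, and close with a double-count plus a Cartesian-product squeeze $s|S_\ell|\le|U_\ell||V_\ell|=|S_\ell|^2$. Both proofs rest on the same two ingredients (Lemma~\ref{lem:two_semi-perfect_functions} applied at $j-1,j$ and at $j,j+1$, together with Proposition~\ref{prop:LR=s}); the paper's is shorter and more local, while yours makes the disjointness structure of the $L$- and $R$-sets across different values of $c_j$ explicit, which is a nice structural insight even if it costs more work to reach the same conclusion.
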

\begin{proof} We proceed by contradiction.
Without loss of generality, assume $j=0$ and fix $c_0,c_0^\prime \in \Z_s$ arbitrarily. Now choose $c_{-1},c_1\in \Z_s$ such that $f_0(c_{-1},c_1)=c_0$ and choose $c_{-1}^\prime,c_1^\prime \in \Z_s$ such that $f_0(c_{-1}^\prime ,c_1^\prime)=c_0^\prime$. Also let $c_0^{\prime\prime} = f_0(c_{-1}^\prime ,c_1)$. Now by Lemma~\ref{lem:two_semi-perfect_functions},
	\begin{align*}
	|L(f_0,c_0)| = | \{ x | f_0(x,c_1)=c_0 \} | = | \{ x | f_1(c_0,x)=c_1 \} | & = |R(f_1,c_1)| \\ \mbox{and} \qquad
	|L(f_0,c_0^{\prime\prime})| = | \{ x | f_0(x,c_1)=c_0^{\prime\prime} \} | = | \{ x | f_1(c_0^{\prime\prime},x)=c_1 \} | & = |R(f_1,c_1)|. 
	\end{align*}
Similarly
	\begin{align*}
	|R(f_0,c_0^{\prime\prime})| = | \{ x | f_0(c_{-1}^{\prime},x)=c_0^{\prime\prime} \} | = | \{ x | f_{-1}(x,c_0^{\prime\prime})=c_{-1}^{\prime} \} | & = |L(f_{-1},c_{-1}^{\prime})| \\ \mbox{and} \qquad
	|R(f_0,c_0^{\prime})| = | \{ x | f_0(c_{-1}^{\prime},x)=c_0^{\prime} \} | = | \{ x | f_{-1}(x,c_0^{\prime})=c_{-1}^{\prime} \} | & = |L(f_{-1},c_{-1}^{\prime})|.
	\end{align*} 

Recall that $|L(f_0,z)|\cdot |R(f_0,z)|=s$ for all $z\in \Z_s$ (Proposition~\ref{prop:LR=s}). Therefore, $|R(f_0,c_0^{\prime})|=|R(f_0,c_0^{\prime\prime})|$ if and only if $|L(f_0,c_0^{\prime})|=|L(f_0,c_0^{\prime\prime})|$. Hence, 
	$$ |L(f_0,c_0)| = |L(f_0,c_0^{\prime\prime})| = |L(f_0,c_0^{\prime})|. $$
Similarly, $|R(f_0,c_0)| = |R(f_0,c_0^\prime)|$ (for arbitrary $c_0,c_0^\prime \in \Z_s$) and therefore $f_0$ is a perfect function.
\end{proof}

\begin{lem}
\label{lem:one_perfect_function}
Let $\Pr = (f_1,f_2, \ldots ,f_n)$ be a non-trivial protocol on $C_n$ with $s \geq 2$ colours, such that $f_j$ is a perfect function for some $j$. Then $\fix(\Pr) \leq as^{(n-1)/2}$, where $a$ is the greatest factor of $s$ less than or equal to $\sqrt{s}$.
\end{lem}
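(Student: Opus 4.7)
The plan is a direct combinatorial counting that bypasses the entropy machinery of Section~\ref{sec:entropy}. By cyclic relabelling of the vertices of $C_n$ (which preserves $\fix(\Pr)$), I may assume $j=1$, so that $f_1$ is perfect. Write $\ell = |L(f_1,z)|$ and $r = |R(f_1,z)|$; both are independent of $z$ since $f_1$ is perfect, and Proposition~\ref{prop:LR=s} gives $\ell r = s$. Reversing the cyclic orientation if necessary (which swaps the two arguments of every $f_i$ and therefore interchanges $L$ with $R$), I may further assume $\ell \leq r$, so that $\ell$ is a factor of $s$ not exceeding $\sqrt{s}$; in particular $\ell \leq a$.

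The crux is to show that the projection
	$$ \Phi : \Fix(\Pr) \to \Z_s^{(n-1)/2}, \qquad \Phi(c) = (c_3,c_5,c_7,\ldots,c_n) $$
has every fibre of cardinality at most $\ell$. Granting this,
	$$ \fix(\Pr) \;=\; \sum_{t} |\Phi^{-1}(t)| \;\leq\; s^{(n-1)/2}\cdot \ell \;\leq\; a\cdot s^{(n-1)/2}. $$

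To bound a fibre, I fix a tuple $t=(c_3,c_5,\ldots,c_n)$ in the image and consider any $c \in \Phi^{-1}(t)$. The even-indexed coordinates $c_4, c_6, \ldots, c_{n-1}$ are recovered from $t$ via $c_{2k} = f_{2k}(c_{2k-1},c_{2k+1})$ for $k=2,\ldots,(n-1)/2$. Once $c_1 \in \Z_s$ is chosen, $c_2 = f_2(c_1,c_3)$ is determined and hence so is the entire colouring $c$; thus $c$ is parametrised by $c_1$ alone. The identity $c_1 = f_1(c_n, c_2)$, combined with $f_1$ being perfect, forces $(c_n,c_2) \in L(f_1,c_1) \times R(f_1,c_1)$, and in particular $c_n \in L(f_1,c_1)$; equivalently $c_1$ must lie in $L^\ast := \{z \in \Z_s : c_n \in L(f_1,z)\}$. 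A short count gives $|L^\ast|=\ell$: for fixed $c_n$ the map $f_1(c_n,\cdot):\Z_s \to \Z_s$ attains each value $z \in L^\ast$ exactly $|R(f_1,z)|=r$ times, so $|L^\ast|\cdot r = s$. Thus at most $\ell$ values of $c_1$ are consistent with $t$, proving the claim.

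The main obstacle is spotting the correct projection: once $\Phi$ is arranged so that the single remaining degree of freedom in a fibre is the coordinate $c_1$ adjacent to the perfect vertex $v_1$, the perfect structure of $f_1$ immediately caps the candidate values of $c_1$ at $\ell$, and no additional argument is required.
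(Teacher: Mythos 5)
Your proof is correct, and it takes a genuinely different route from the paper. The paper's proof (with $j=2$) stays inside the entropy framework of Sections 2--4: it bounds $h(1,2)\leq\log_s(ls)$ by observing that, conditional on $X_2=z$, $X_1$ ranges over $L(f_2,z)$, then applies the elimination rule~\eqref{eq:awesome} and subadditivity to get $H(X)\leq h(1,2)+h(4)+h(6)+\cdots+h(n-1)\leq\log_s(ls)+\tfrac{n-3}{2}$. You instead give a direct combinatorial count: the projection onto the odd coordinates $(c_3,c_5,\ldots,c_n)$ has fibres of size at most $\ell$, since a fibre is parametrised injectively by $c_1$ and the constraint $c_1=f_1(c_n,c_2)$ together with Proposition~\ref{prop:LR=s} forces $c_1\in L^\ast=\{z:c_n\in L(f_1,z)\}$, a set of size exactly $s/r=\ell$. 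Both approaches hinge on the same structural fact (Proposition~\ref{prop:LR=s}) and yield the identical bound $\ell s^{(n-1)/2}$, but yours is self-contained and elementary, requiring none of the entropy lemmas, and makes the ``where the factor $\ell$ comes from'' visible at the level of individual colourings rather than distributions. The paper's version has the advantage of fitting uniformly into the entropy calculus used everywhere else in Sections~\ref{sec:entropy}--\ref{sec:n_large}, so the two proofs are roughly equal in length but serve slightly different expository goals. One small remark: your argument only uses the constraint $c_1=f_1(c_n,c_2)$ and ignores the remaining consistency conditions (e.g.\ $c_3=f_3(c_2,c_4)$), which is fine because you only need an upper bound on the fibre, but it is worth stating this explicitly.
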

\begin{proof}
Without loss of generality, assume $j=2$. Since $f_2$ is perfect, let $l = |L(f_2,z)|$ and $r=|R(f_2,z)|$. Without loss of generality, assume $l \leq r$ and therefore $l \leq a$. Then $X_2$ takes at most $s$ different values and $X_1$, conditioned on $X_2=z$ for any $z \in \Z_s$, takes at most $l$ different values. Therefore, the pair $(X_1,X_2)$ takes at most $ls$ different values in $\Z_s^2$ and $h(1,2) \leq \log_s(ls)$. We have
	\begin{align*}
	H(X) & = h(1,2,3, \ldots ,n) \\
	& = h(1,2,4,6, \ldots ,n-3,n-1) \\
	& \leq h(1,2) + \sum_{i=1}^{(n-3)/2} h(2i+2) \\
	& \leq \log_s(ls) + \frac{n-3}{2}. 
	\end{align*}
Therefore $\fix(\Pr) = s^{H(X)} \leq ls^{(n-1)/2} \leq as^{(n-1)/2}$.
\end{proof}

\begin{thm}
\label{thm:n_large}
For any integer $s \geq 2$, let $a$ be the greatest factor of $s$ less than or equal to $\sqrt{s}$. There exists some $N \in \mathbb{N}$ such that 
$$\gn (C_n,s)=\begin{cases} \frac{n}{2}, &\mbox{for even } n , \\  \frac{n-1}{2}+\log_s a, &\mbox{for odd }n>N,   \end{cases}$$
and $\Pr_{fcp}$ is an optimal protocol on $C_n$ with $s$ colours for any odd $n \geq N$.
\end{thm}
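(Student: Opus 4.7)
The plan is to combine the lower bound from Proposition~\ref{prop:fix_Ffcp} with an upper bound that comes from a case analysis driven by the entropy machinery of Section~\ref{sec:entropy} and the structural results on semi-perfect/perfect functions from Lemmas~\ref{lem:two_semi-perfect_functions}--\ref{lem:one_perfect_function}. The even-$n$ case is already recorded as equation~\eqref{eq:perfect_graphs}, so the real content concerns odd $n$ above a threshold.

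First I would fix $N = 7(\delta^{-1}+2)$ (the threshold from Lemma~\ref{lem:many_non-semi-perfect_functions}) and take $n \geq N$ odd. The lower bound $\gn(C_n,s) \geq \tfrac{n-1}{2} + \log_s a$ is immediate from Proposition~\ref{prop:fix_Ffcp}, since $\Pr_{fcp}$ achieves $\fix(\Pr_{fcp}) = as^{(n-1)/2}$. So the task reduces to establishing the matching upper bound $\fix(\Pr) \leq as^{(n-1)/2}$ for every non-trivial protocol $\Pr$ on $C_n$.

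For the upper bound, let $\Pr=(f_1,\dots,f_n)$ be an arbitrary non-trivial protocol on $C_n$ with $s$ colours and let $X \in_u \Fix(\Pr)$. I would split into two cases according to whether the hypothesis of Lemma~\ref{lem:many_non-semi-perfect_functions} is triggered. In the first case, suppose that for every index $j$ at least one of $f_{j-1},f_j,f_{j+1}$ fails to be semi-perfect or at least one of $(X_{j-2},X_j),(X_{j-1},X_{j+1}),(X_j,X_{j+2})$ fails to be $(s^2,\epsilon)$-uniform. Then Lemma~\ref{lem:many_non-semi-perfect_functions} gives $\fix(\Pr)< s^{(n-1)/2} \leq a s^{(n-1)/2}$ and we are done. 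In the second case, there exists an index $j$ at which all three functions $f_{j-1},f_j,f_{j+1}$ are semi-perfect and all three pairs $(X_{j-2},X_j),(X_{j-1},X_{j+1}),(X_j,X_{j+2})$ are $(s^2,\epsilon)$-uniform. Lemma~\ref{lem:three_semi-perfect_functions} then promotes $f_j$ to a perfect function, and Lemma~\ref{lem:one_perfect_function} gives $\fix(\Pr)\leq as^{(n-1)/2}$. In either case we obtain the desired upper bound, matching the lower bound and showing that $\Pr_{fcp}$ is optimal.

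This is a routine plumbing argument once all the lemmas are in place, so I do not anticipate a real obstacle at this final step; the substantive work has been done in the preceding two sections. The only small thing to double-check is that the threshold $N = 7(\delta^{-1}+2)$ from Lemma~\ref{lem:many_non-semi-perfect_functions} is consistent with the cyclic indexing used in Lemma~\ref{lem:three_semi-perfect_functions} (which allows indices modulo $n$), so that the index $j$ produced by the failure of the hypothesis of Lemma~\ref{lem:many_non-semi-perfect_functions} can be fed directly into Lemma~\ref{lem:three_semi-perfect_functions}. Assuming that is in order, the theorem follows by the two-case dichotomy above combined with Proposition~\ref{prop:fix_Ffcp}.
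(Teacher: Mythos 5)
Your proposal is correct and follows essentially the same two-case dichotomy as the paper's own proof: invoke Lemma~\ref{lem:many_non-semi-perfect_functions} when the local failure condition holds everywhere, and otherwise apply Lemma~\ref{lem:three_semi-perfect_functions} followed by Lemma~\ref{lem:one_perfect_function}, matching the lower bound from Proposition~\ref{prop:fix_Ffcp}. The only difference is that you explicitly record the lower bound and the even-$n$ reference to equation~\eqref{eq:perfect_graphs}, which the paper leaves implicit.
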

\begin{proof}
Let $\epsilon$ and $\delta$ be the values given in Definition~\ref{defn:epsilon,delta}, let $N = 7(\delta^{-1}+2)$ and let $\Pr = (f_1,f_2, \ldots ,f_n)$ be any non-trivial protocol on $C_n$ with $s$ colours. We have two cases:
\begin{description}
\item[Case one] 
	For all $j$, either:
	\begin{itemize} 
	\item at least one of the functions $f_{j-1}$, $f_j$ and $f_{j+1}$ is not semi-perfect or 
	\item at least one of $(X_{j-2},X_j)$, $(X_{j-1},X_{j+1})$, $(X_j,X_{j+2})$ is not $(s^2,\epsilon)$-uniform. 
	\end{itemize}
\item[Case two] 
	There exists some $j$ such that: 
	\begin{itemize}
	\item the functions $f_{j-1}$, $f_j$ and $f_{j+1}$ are all semi-perfect and 
	\item $(X_{j-2},X_j)$, $(X_{j-1},X_{j+1})$ and $(X_j,X_{j+2})$ are all $(s^2,\epsilon)$-uniform.
	\end{itemize}
\end{description}
For case one, we can conclude that $\fix(\Pr) \leq s^{(n-1)/2} \leq \fix(\Pr_{fcp})$ by Lemma~\ref{lem:many_non-semi-perfect_functions}. In case two, $f_j$ must be a perfect function (Lemma~\ref{lem:three_semi-perfect_functions}) and then $\fix(\Pr) \leq as^{(n-1)/2} = \fix(\Pr_{fcp})$ (Lemma~\ref{lem:one_perfect_function}). In either case, $\fix(\Pr_{fcp}) \geq \fix(\Pr)$. Hence $\Pr_{fcp}$ is optimal.
\end{proof}

\section{An application to index coding with side information}
\label{sec:defective}

In the problem of index coding with side information on a graph $G$, a sender aims communicate $n$ messages $c_1,c_2, \ldots ,c_n$ (where $c_i \in \Z_s$) to $n$ receivers $v_1,v_2, \ldots ,v_n$ (the vertices of $G$). Each receiver, $c_i$, knows $c_j$ in advance, for each $j$ such that $v_iv_j$ is an edge in $G$. The sender is required to broadcast a message to all receivers (the same message to all receivers) so that each receiver, $v_i$, can recover $c_i$. If $m$ is the smallest integer such that the sender can achieve this by broadcasting one of only $m$ different messages, then the \emph{information defect} \cite{Riis2007graph} of $G$ with $s$ colours is defined to be 
	$$ \b(G,s)  = \log_s(m). $$
The relationship between the guessing number and information defect of a graph is well known. 
Explicitly, let $\C _s(G)$ be the \emph{confusion graph} \cite{Alon2008broadcasting,bar2011index} (also known as the ``code graph'' \cite{Christofides2011guessing}), defined to have vertex set $\Z_s^n$, in which two vertices $c,c' \in \Z_s^n$ are adjacent if and only if for some $i \in [n]$, $c_i \neq c_i^\prime$ but for each $j$ such that $ij \in E(G)$ we have $c_j=c_j^\prime$. 
Intuitively $c,c^\prime \in Z_s^n$ are `confusable' (joined by an edge in the confusion graph) if there is no protocol $\Pr$, for the guessing game on $G$, such that both $c,c^\prime \in \Fix(\Pr)$ (\emph{i.e.} $c$ and $c^\prime$ cannot both be encoded with the same message from the sender.).
If $\chi(\C_s(G))$ is chromatic number of the confusion graph of $G$ and $\alpha(\C_s(G))$ is the size of the largest independent set in the confusion graph of $G$, then 
	$$ \b(G,s) = \log_s \chi(\C_s(G)) \qquad \mbox{and} \qquad \gn(G,s) = \log_s \alpha(\C_s(G)). $$ 
For any graph $H$, we have the identity $\chi(H)\alpha(H) \geq |H|$ and so we have the identity \cite{Riis2007graph}
	$$ \b(G,s) + \gn(G,s) \geq \log_s \left| \C_s(G) \right| = n. $$
We use this identity and the fact that the fractional-clique protocol $\P_{fcp}$ is optimal (Theorem~\ref{thm:n_large}) to prove Theorem~\ref{thm:defective}. This theorem in general is a new result, although the case $s=2$ was proven combinatorially in \cite{bar2011index}. Theorem~\ref{thm:defective} shows that the size of an optimal index code, $\b(G,s)$, depends on the factorisation structure of the size of the alphabet, $s$, used for the input. 

\begin{thm}
\label{thm:defective}
For a given $s$, let $b$ be the smallest factor of $s$ which is at least $\sqrt{s}$. There exists some $N$ such that for all odd $n>N$, 
	$$ \b(C_n,s) = \frac{n-1}{2} + \log_sb. $$
\end{thm}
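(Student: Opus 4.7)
The plan is to establish both halves of the equality separately, using rather different ingredients. The lower bound $\beta(C_n, s) \geq (n-1)/2 + \log_s b$ should follow in essentially one line from the identity $\beta(G,s) + \gn(G,s) \geq n$ (recorded in the preamble to the theorem) together with Theorem~\ref{thm:n_large}. Substituting $\gn(C_n,s) = (n-1)/2 + \log_s a$ for odd $n > N$ and using $\log_s a + \log_s b = \log_s(ab) = 1$ gives
\[
\beta(C_n,s) \;\geq\; n - \tfrac{n-1}{2} - \log_s a \;=\; \tfrac{n-1}{2} + \log_s b.
\]

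For the matching upper bound I would construct an explicit index code with exactly $b \cdot s^{(n-1)/2}$ distinct broadcast messages, which is precisely the number of colour classes needed to partition $\Z_s^n$ into independent sets of $\C_s(C_n)$ of the maximum size $a \cdot s^{(n-1)/2}$ realised by $\Pr_{fcp}$. The code dualises the fractional-clique-partition protocol: using the bijection of Definition~\ref{defn:phi_and_psi} with components $\phi:\Z_s \to \Z_a$ and $\psi:\Z_s \to \Z_b$, the sender broadcasts the tuple
\[
\big(d_1, d_3, \ldots, d_{n-2};\; e_2, e_4, \ldots, e_{n-1};\; f\big),
\]
where $d_i := \phi(c_i) - \phi(c_{i+1}) \in \Z_a$ for each odd $i \leq n-2$, $e_i := \psi(c_i) - \psi(c_{i+1}) \in \Z_b$ for each even $i \leq n-1$, and $f := \psi(c_1) - \phi(c_n) \in \Z_b$ (treating $\phi(c_n) \in \{0,1,\ldots,a-1\}$ as a subset of $\{0,1,\ldots,b-1\}$). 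The total number of possible messages is $a^{(n-1)/2} b^{(n-1)/2} \cdot b = b \cdot s^{(n-1)/2}$, as required. For each interior vertex $v_j$ with $2 \leq j \leq n-1$, the $d$-entry sharing an edge with $v_j$ combined with the $\phi$-coordinate of its neighbour on that edge recovers $\phi(c_j)$, and the appropriate $e$-entry analogously yields $\psi(c_j)$, so $v_j$ decodes $c_j$.

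The one subtle step, and where I would expect the argument to need care, is the wrap-around at the pair $v_1, v_n$, where the broadcast is ``short one symbol'' compared to the interior pattern. Vertex $v_1$ sees $c_n$ and $c_2$ but still needs $\psi(c_1)$, while $v_n$ sees $c_{n-1}$ and $c_1$ but still needs $\phi(c_n)$, and the single symbol $f \in \Z_b$ must serve both. This works because $\phi(c_n)$ always lies in the subset $\{0, 1, \ldots, a-1\}$ of $\Z_b$: receiver $v_1$ computes $\psi(c_1) = \phi(c_n) + f$ in $\Z_b$ directly using $\phi(c_n)$ read off from its neighbour $c_n$, while $v_n$ computes $\phi(c_n) = \psi(c_1) - f$ in $\Z_b$ from $\psi(c_1)$ read off from its neighbour $c_1$, and by the very definition of $f$ this result is forced to fall in $\{0, \ldots, a-1\}$ and equal $\phi(c_n)$. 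Combining the two bounds yields the theorem; note that this construction needs no assumption on $n$ beyond oddness, so the hypothesis $n > N$ enters only via the lower-bound half (through Theorem~\ref{thm:n_large}).
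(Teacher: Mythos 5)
Your proof is correct and follows essentially the same strategy as the paper's: the lower bound via the identity $\b(G,s)+\gn(G,s)\geq n$ combined with Theorem~\ref{thm:n_large}, and the upper bound via an explicit index code broadcasting $(n-1)/2$ residues in $\Z_a$, $(n-1)/2$ residues in $\Z_b$ for the interior edge pairs, plus one extra residue in $\Z_b$ for the wrap-around, giving $bs^{(n-1)/2}$ messages in total. The only cosmetic difference is that you use differences where the paper uses sums, and your explicit treatment of the wrap-around decoding at $v_1$ and $v_n$ (using $\phi(c_n)\in\{0,\ldots,a-1\}\subseteq\Z_b$) spells out a step the paper merely asserts.
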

\begin{proof}
Write $a=s/b$. First by Theorem~\ref{thm:n_large}, $\gn(C_n,s) = (n-1)/2 +\log_sa$ for all large enough odd $n$. Therefore, 
	$$ \b(C_n,s) \geq n - \gn(C_n,s) = \frac{n-1}{2} + \log_sb. $$
To show that we in fact get equality, we define a set of $bs^{(n-1)/2}$ possible messages with which the sender can solve the index coding with side information problem on $C_n$. Let $\phi$ and $\psi$ be defined as in Definition~\ref{defn:phi_and_psi}. This means that $\phi \times \psi$ is a bijection from $\Z_a \times \Z_b$ to $\Z_s$. Now for any colouring $c = (c_1,c_2, \ldots ,c_n) \in \Z_s^n$ let the sender broadcast the following values: 
\begin{itemize}
\item For $i=1,2,3, \ldots ,\tfrac{n-1}{2}$, the sender broadcasts the residue $\phi(c_{2i-1})+\phi(c_{2i})$ modulo $a$ and 
the residue $\psi(c_{2i})+\psi(c_{2i+1})$ modulo $b$.
\item Additionally, the sender broadcasts the residue $\psi(c_1)+\phi(c_n)$ modulo $b$.
\end{itemize}
The sender broadcasts $\frac{n-1}{2}$ residues modulo $a$ and $\frac{n+1}{2}$ residues modulo $b$, and so the total number of possible messages that the sender might send is 
	$$ m = a^{(n-1)/2}b^{(n+1)/2} = bs^{(n-1)/2}. $$
Furthermore, each receiver, $v_i$, knows $c_{i-1}$ and $c_{i+1}$, and so can recover both $c_i$ because she can recover both $\phi(c_i)$ and $\psi(c_i)$. 
\end{proof}

\section{Acknowledgements}
Puck Rombach is supported by AFOSR MURI Grant No. FA9550-10-1-0569 and ARO MURI Grant No. W911NF-11-1-0332.

\bibliography{gnCnARS}
\bibliographystyle{plain}

\end{document}